\newcommand{\revise}[1]{\textcolor{black}{#1}}
\begin{document}

\title[ImageEye:  Batch Image Processing using Program Synthesis]{ImageEye:  Batch Image Processing using Program Synthesis}         


\author{Celeste Barnaby}
\orcid{0000-0001-7688-6133}
\affiliation{
  \institution{University of Texas at Austin}
  \country{USA} 
}
\email{celestebarnaby@utexas.edu} 

\author{Qiaochu Chen}
\orcid{0000-0003-4680-5157}             
\affiliation{           
  \institution{University of Texas at Austin}           
  \country{USA}                   
}
\email{qchen@cs.utexas.edu}

\author{Roopsha Samanta}
\orcid{0009-0000-2456-217X}
\affiliation{
    \institution{Purdue University}
    \country{USA}
}
\email{roopsha@purdue.edu}

\author{Işıl Dillig}
\orcid{0000-0001-8006-1230}
\affiliation{
    \institution{University of Texas at Austin}
    \country{USA}
}
\email{isil@cs.utexas.edu}

\begin{abstract}
This paper presents a new synthesis-based approach for batch image processing. Unlike existing  tools that can only apply global edits to the entire image, our method can apply fine-grained edits to individual objects within the image. For example, our method can selectively blur or crop specific objects that have a certain property.  To facilitate such fine-grained image editing tasks, we propose a neuro-symbolic domain-specific language (DSL) that combines pre-trained neural networks for image classification with other language constructs that enable symbolic reasoning. Our method can automatically learn programs in this DSL from user demonstrations by utilizing a novel synthesis algorithm. We have implemented the proposed technique in a tool called \toolname and evaluated it on 50 image editing  tasks. Our evaluation shows that \toolname is able to automate 96\% of these tasks.

\end{abstract}

\begin{CCSXML}
<ccs2012>
<concept>
<concept_id>10011007.10011074.10011092.10011782</concept_id>
<concept_desc>Software and its engineering~Automatic programming</concept_desc>
<concept_significance>500</concept_significance>
</concept>
</ccs2012>
\end{CCSXML}

\ccsdesc[500]{Software and its engineering~Automatic programming}

\keywords{Program Synthesis, Neuro-symbolic Synthesis, Computer Vision}  

\maketitle

\section{Introduction}\label{sec:intro}

Because many real-world scenarios require editing a very large number of images, existing photo editing software provides some support for image processing in batch mode. For example, popular software like Adobe Photoshop and Luminar allow users to process multiple files at the same time by specifying  actions like resizing or converting to a specified file type. 

Despite the popularity of such tools, batch editing  capabilities of existing software are extremely limited and can only perform edits globally to the \emph{entire} image. However, many image editing tasks of interest require fine-grained edits to specific parts of the image. For example, consider a scenario where someone wants to upload a collection of their photos after concealing the identities of certain people. Such a task requires performing selective edits (e.g., blurring) to certain parts of the image but not others. As another example, consider a batch processing task where someone wishes to adjust the  white balance of certain types of objects, such as  human faces. Because this requires combining programmatic edits with object classification, existing solutions fall short in successfully automating such image manipulation tasks. 

In this paper, we propose a new technique, based on program synthesis, for automating selective image editing tasks. Given a small set of user demonstrations (performed through a graphical user interface), our  approach automatically synthesizes a program that can be applied to a  much larger set of images. Because these programs are expressed in a neuro-symbolic domain-specific language (DSL) comprised of both logical operators and (pre-trained) neural networks, they can be used to perform \emph{fine-grained  edits} where different actions can be selectively applied to different parts of the image. As a result, our approach can automate image processing tasks that are well beyond the scope of existing tools. 

At a high level, programs in our image editing DSL specify what actions (blur, crop etc.) to apply to what parts of the image. Thus, an image editing program can be viewed as a set of \emph{extractor} and \emph{action} pairs where each extractor selects a \emph{part} of the image and the action specifies what operation to apply to that part. Because these extractors are expressed in a rich vocabulary involving both neural primitives and functional operators, our DSL makes it possible to combine object classification with  relational reasoning. 

Beyond proposing a  DSL for batch image processing, a key contribution of this paper is a new program synthesis technique \revise{for learning programs in this DSL. At a high level, our approach reduces this problem to a more standard programming-by-example (PBE) task by utilizing the concept of \emph{symbolic images}: rather than representing an image as a set of low-level pixels, we represent images as a mapping from object identifiers to their symbolic properties.  This  representation is obtained by applying segmentation to the input image and utilizing neural object classifiers to extract attributes of each detected object. Overall, this symbolic representation is crucial to our technique in two ways: First, it allows defining  a formal semantics of our DSL in terms of sets of high-level objects as opposed to a 2D array of low-level pixels. Second,  due to this symbolic representation, the learning task can be reduced to the problem of  synthesizing an extractor function that  produces a target set of objects from among all objects in the input image.} 

 \revise{
To solve the PBE problem in this context, our approach utilizes  top-down enumerative search, as done in prior work~\cite{lambda2, aws_equiv_reduc, flashextract, scythe}. As standard, the idea is to maintain a worklist of \emph{partial programs} (i.e., programs with unknown parts) that are gradually refined into a concrete implementation.
However, basic enumerative search does not scale to the image manipuation tasks of interest in this work because images often contain \emph{many} objects with \emph{many} different attributes. As a result, the search space becomes enormous,  necessitating novel pruning techniques that can be utilized to rule out redundant or infeasible partial programs. Specifically, our underlying PBE algorithm addresses scalability challenges of this domain through two key insights:}


\begin{enumerate}[leftmargin=*]
    \item {\bf Equivalence reduction with term rewriting and partial evaluation:} Many partial programs enumerated during top-down search are bound to produce the same output image \emph{no matter} how the unknown parts  are instantiated.
    In other words, the basic search procedure ends up enumerating many \emph{redundant} partial programs that can be safely thrown away. To detect such redundancies and prune the search space, our method leverages a combination of \emph{term rewriting} and \emph{partial evaluation} to reason about observational equivalence in the context of images. \revise{While prior work on program synthesis has used term rewriting and partial evaluation \emph{in isolation}, we show that term rewriting is considerably more effective in this context when it is combined with partial evaluation.}
    \item \revise{{\bf Abstract semantics for images:}} Some partial programs enumerated during search can \emph{never} produce the target output image no matter how they are refined into a concrete implementation.  \revise{To avoid such dead-ends in the search space, our method utilizes a novel abstraction (and its corresponding abstract semantics) for image editing programs.  In particular, reasoning backwards from the target image, our method infers the set of objects that \emph{must} and \emph{may} appear in an (unknown) subprogram and uses this information to identify infeasible partial programs in our image editing DSL.} 

\end{enumerate}

We have implemented our proposed approach in a tool called \toolname and evaluated it on 50  image processing tasks inspired by practical tasks and on-line forum discussions. Our evaluation shows that \toolname can successfully automate 48 of these tasks and that it can infer the intended program after a small number of user demonstrations. We also perform  comparisons against simpler synthesis baselines and present ablation studies to demonstrate the usefulness of our proposed synthesis technique. 

To summarize, this paper makes the following key contributions:

\begin{itemize}[leftmargin=*]
    \item We describe the first  solution for automating fine-grained image editing tasks.
    \item We present an image processing DSL that combines neural computer vision primitives with programmatic constructs for  relational reasoning, \revise{and we define the formal semantics of this DSL in terms of the concept of \emph{symbolic images}}. 
    \item We propose a novel synthesis algorithm for generating programs in our DSL from a set of user demonstrations. \revise{Our technique decomposes the overall synthesis problem into a set of independent PBE tasks and utilizes two key ideas (namely, abstraction-guided reasoning about images and combined use of partial evaluation and term rewriting) to allow this approach to scale to realistic image batches.}
    \item We implement our approach in a new tool called \toolname and evaluate it experimentally on dozens of image editing tasks involving a diverse set of images.
\end{itemize}
\section{Overview}\label{sec:overview}


\mypar{Usage scenario} Suppose that a user has a batch of several hundred images from a school recital. The user would like to identify all  images that feature their daughter playing the violin and crop everything else out of those images. Our proposed tool, \toolname, is useful for these types of tasks  that are  easy for a small number of images but grueling for a large batch. 

To automate this task, the user loads their images into \toolname and  identifies a few images that feature their daughter playing a violin. \revise{For each image, they use the \toolname graphical user interface to  select their daughter's face and violin, and choose the ``Crop" action to crop the the selected region. Under the hood, the \toolname GUI uses computer vision models (for image segmentation and object recognition) to highlight the detected objects and allows the user to select each object individually. If the relevant objects   are not detected in the current image, the user will quickly realize that this particular image is not useful for demonstrating their intent and move on to a different image.} 

Once the user has edited a few representative images, they press the “Synthesize” button, which invokes \toolname's synthesis engine and searches for a program $P$ that matches the user's demonstrations. \revise{\toolname then applies $P$ to all images in the batch, and produces a new set of edited images. Next, the user inspects the resulting images to decide whether the synthesized program is correct. If $P$ fails to produce the intended edit for many images in the batch, then the synthesis result is likely incorrect  and the user may provide one or more additional demonstrations as training examples. On the other hand, if the output images look as intended, the synthesized program likely captures the user's intent. However, there may be some imperfections due to shortcomings of the neural models used in the synthesized program. In this case, the user needs to manually edit a small number of images where the resulting image differs from the expected output, but this is still much more convenient than manually editing \emph{all} images in the batch. 
}


\mypar{Shortcomings of existing techniques} While existing image editing tools like Photoshop and GIMP allow some forms of batch processing, they only support simple manipulations (like resizing or applying a filter) to the \emph{entire} image. Notably, such tools do not allow batch processing tasks that differ based on the \emph{content} of each image. \revise{Because our motivating example requires reasoning about the presence of specific objects in each image and applying a cropping action accordingly, such tools are not useful for automating this task.}

On the other hand, tools like Amazon Rekognition use pre-trained neural models for object detection. Given an image, these tools can identify and locate a wide range of objects, including text and human faces. In addition, they can recognize the same face across different images and discern several interesting properties of human faces, such as their approximate age and whether the person is smiling. However, they neither provide functionality for editing individual images nor for batch processing. \revise{Hence, such computer vision models are also not directly useful for performing the task in our motivating example.}

\mypar{Our approach} 
\revise{
In contrast to existing techniques, our approach combines the relative strengths of programmatic batch image processing with  computer vision techniques (image segmentation and object recognition).  Specifically, our approach utilizes a neuro-symbolic DSL that leverages pre-trained neural networks for \emph{perception} and higher-level language constructs for \emph{symbolic reasoning}. The combination of these symbolic and neural constructs is very powerful in that it enables (a) reasoning about relationships between different objects in the image, and (b) selectively editing parts of the image that contain some visual cue of interest.}

In more detail, a program in our neuro-symbolic DSL is of the form $\{E \shortrightarrow A, \dots, E \shortrightarrow A\}$, where $A$ is an action and $E$ is an extractor. An action describes a specific image manipulation (e.g. \textsf{Crop} or \textsf{Blur}),  and an extractor describes the subset of objects to which that action will be applied. As a simple example, consider the program $\{\textsf{Object}(\texttt{cat}) \rightarrow \textsf{Brighten}\}$, which applies a brightening filter to all cats in the input image. 
This DSL also allows combining different objects via  standard set operators. For instance, the extractor $\textsf{Intersect}({\sf Is}(\textsf{Smiling}), \textsf{Complement}({\sf Is}(\textsf{EyesOpen})))$ extracts all human faces  that are smiling and do \emph{not} have their eyes open. In addition, a $\textsf{Find}$ operator can be used to extract objects based on their relative position within the image. For instance, the extractor $\textsf{Find}( {\sf Is}(\textsf{Text}(\texttt{"Total"})), \textsf{TextObject}, \textsf{GetRight})$  first identifies all text objects matching the word \texttt{"Total"} and then, for each such object $o$, it extracts the first text object that is to the right of $o$.

With these basic DSL constructs in mind, let us consider the  program that can be used to automate our target task (i.e., finding and cropping all images that feature the user's daughter with a violin). This task can be expressed with the following program in our DSL:
\begin{align*}
\small
\begin{split}
\{\textsf{Union}(&\textsf{Find}({\sf Is}(\textsf{Face}({\tt Id})), \textsf{Object}(\texttt{violin}), \textsf{GetBelow}), \\
&\textsf{Find}({\sf Is}(\textsf{Object}(\texttt{violin})), \textsf{Face}({\tt Id}), \textsf{GetAbove})) \rightarrow \textsf{Crop}\}
\end{split}
\end{align*}

Here, the extractor is a \textsf{Union} of two sub-extractors: \textsf{Find}({\sf Is}(\textsf{Face}({\tt Id})), \textsf{Object}(\texttt{Violin}), \textsf{GetBelow}) and \textsf{Find}({\sf Is}(\textsf{Object}(\texttt{Violin})), \textsf{Face}({\tt Id}), \textsf{GetAbove}). The first sub-extractor identifies all human faces with the identifier {\tt Id}, where {\tt Id} corresponds to the face of the user’s daughter. Then, for each such face, this program extracts the first violin object located below that face. Conversely, the second sub-extractor identifies all violin objects; then, for each such object, it extracts the first human face with identifier {\tt Id} that is located above the violin. In other words, the first sub-extractor extracts the violin that is played by the user’s daughter, and the second sub-extractor extracts the face of the user’s daughter when she is holding a violin. The union of these sub-extractors precisely describes the part of the image that the user wants to select.

\begin{figure} 
\begin{minipage}[c]{0.6\textwidth}
    \includegraphics[width=0.7\textwidth]{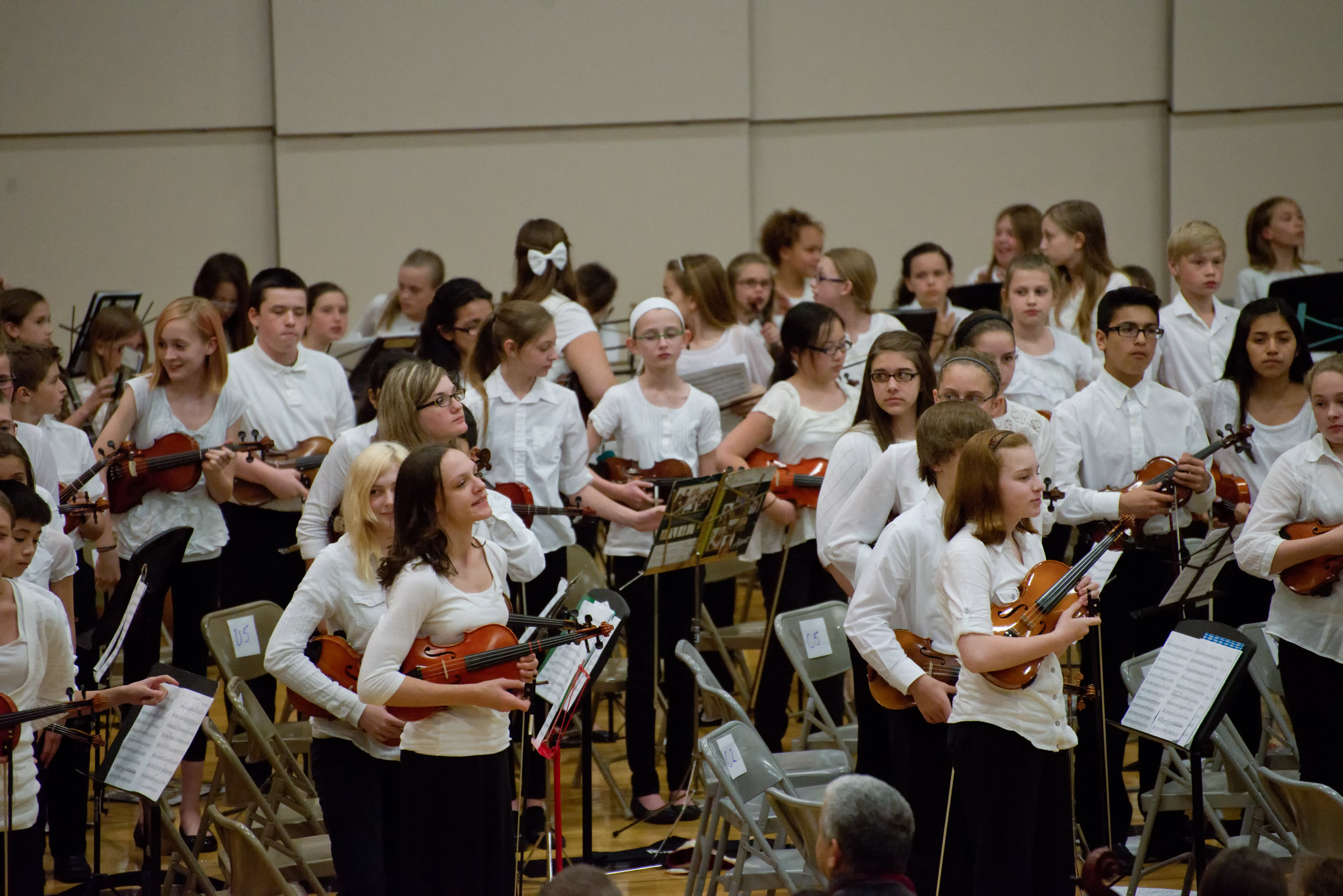}
\end{minipage}
\begin{minipage}[c]{0.05\textwidth}
\centering
\hspace*{-70pt}
$\Longrightarrow$
\end{minipage}
\begin{minipage}[c]{0.24\textwidth}
\small
    \includegraphics[width=0.85\textwidth]{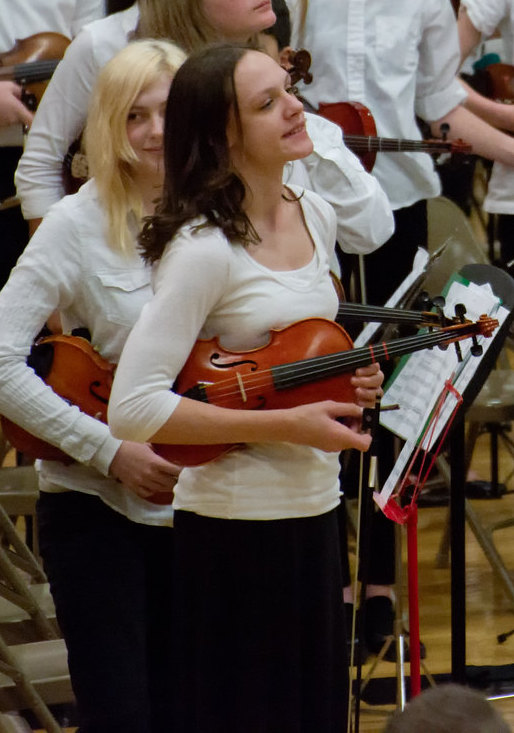}
\end{minipage}
\vspace{-5pt}
    \caption{Example input and output.}
    \label{fig:violin_prog}
\end{figure}
 
\mypar{Neuro-symbolic program synthesis} 
\revise{To generate the desired program from the user's demonstrations, our approach first represents the training images in  \emph{symbolic} form. In particular, rather than viewing each image as a set of low-level pixels, our approach generates a symbolic representation of each image, mapping object identifiers to their properties. This symbolic representation is obtained by running pre-trained neural networks for image segmentation and object recognition on the user-provided images.}

\revise{One of the key advantages of this symbolic image representation is that it allows reducing our complex learning task to the relatively well-understood \emph{programming-by-example (PBE)} problem. In particular, by representing the image in this symbolic form, \toolname can keep track of  which actions  have been  applied to which objects. Hence, the learning task reduces to synthesizing a so-called \emph{extractor} that can be used to programmatically extract the \emph{desired} objects among \emph{all} the objects comprising the symbolic image.}



\revise{While our proposed symbolic image representation allows reducing the learning problem to standard PBE, the resulting PBE task is unfortunately quite challenging. In particular, because images often contain a very large number of objects, the search space for the underlying synthesis problem can become quite massive. To make matters worse, each object has a large number of attributes associated with it, where each attribute corresponds to a pre-trained classifier (e.g., for detecting whether someone is smiling, whether an object is a guitar, etc.). Because each attribute corresponds to a built-in function in the underlying DSL, this means that the space of all programs that the synthesizer needs to consider can be enormous.
}


 \revise{As described briefly in Section~\ref{sec:intro}, the PBE technique underlying our synthesis engine is based on top-down enumerative search, but it utilizes two novel ideas to deal with the scalability challenges that arise in this setting:}

\mypar{Idea \#1: Combining term rewriting with partial evaluation} Despite the large search space, it turns out that many of the programs in our DSL are redundant. For example, consider the partial programs $P_1 = $ \textsf{Union}({\sf Is}(\textsf{Face}({\tt Id})), $\hole$)  and $P_2 = $ \textsf{Union}($\hole$, {\sf Is}(\textsf{Face}({\tt Id}))) where $\hole$ indicates a \emph{hole} (i.e., unknown subprogram). Since the \textsf{Union} operator is commutative,  any solution to the synthesis problem that is a completion of $P_2$ will also be a completion of $P_1$. Thus, we can significantly reduce the search space by detecting such redundant partial programs and pruning them from the search space. To that end, our method uses term rewriting to reduce each partial program to a canonical form and discards all  non-canonical expressions when performing the search. This idea can be seen as an instance of \emph{equivalence reduction} explored in prior work~\cite{aws_equiv_reduc}.   


While the above idea is quite useful in our setting, it is nonetheless \emph{not} sufficient to detect all redundant partial programs of interest. In particular, while two partial programs may not be \emph{always} equivalent, they might still be \emph{observationally equivalent} --- that is, they are guaranteed to have the same behavior on the \emph{given} set of input images. To gain more intuition, consider the  partial extractor $E = $ \textsf{Union}(\textsf{Intersect}({\sf Is}(\textsf{Smiling}), {\sf Is}(\textsf{EyesOpen})), $\hole$) whose canonical form is itself.
However, suppose that the example images provided by the user do not contain any human faces that are both smiling and have their eyes open. Under this assumption, \textsf{Intersect}({\sf Is}(\textsf{Smiling}), {\sf Is}(\textsf{EyesOpen})) is the empty set, so $E$ simplifies to $\hole$. 
Motivated by this observation, our method combines \emph{partial evaluation} with \emph{term rewriting} to further reduce the search space. In particular, our method partially evaluates incomplete programs on the provided input-output examples before reducing them to a canonical form. Because partial evaluation can greatly simplify incomplete programs, combining term writing with partial evaluation significantly  amplifies the pruning power of this technique. \revise{We believe this insight (namely,  combining partial evaluation with term rewriting) could be similarly powerful in reducing the search space in other synthesis settings beyond our image editing domain.}

\mypar{Idea \#2: Goal-directed reasoning via image abstractions} Our synthesis method uses another key idea, namely \emph{goal-directed reasoning via abstraction}, to successfully automate image editing tasks of interest. For example, consider the partial program \textsf{Intersect}($\hole_1$, $\hole_2$), and suppose the goal output $o$ is the set of all dog objects in the example images. While we cannot infer the \emph{exact} output of each hole, we \emph{can} infer that the $\hole_1$ and $\hole_2$ \emph{must} both produce all dog objects in the image due to the semantics of set intersection. Using this kind of goal-directed reasoning, we can prune all partial programs where either hole is instantiated with  {\sf Is}(\textsf{Object(Cat)}) (or any other extractor that does not produce all dogs). Similarly, consider the partial program \textsf{Union}($\hole_1$, $\hole_2$), and suppose  that the target output is again the set of all dogs in the image (and nothing else). In this case, we can infer that each hole should \emph{not} produce anything other than a dog because of the semantics of set union. Hence, if either hole is instantiated with {\sf Is}(\textsf{Object(Cat)}) (or any extractor that produces a non-dog object), we know that the program will be infeasible and can be safely pruned. 

\revise{Based on this motivation, our synthesis algorithm performs a form of abstract interpretation over images to facilitate goal-directed reasoning. In particular, starting from the desired output image, our synthesis technique utilizes the abstract semantics of the image editing DSL to infer specifications of sub-programs yet to be synthesized. These specifications take the form of pairs of over- and under-approximations, $(\absimg^-, \absimg^+)$, where $\absimg^+$ includes all objects that \emph{may} be present in the synthesized program and $\absimg^-$ represents those objects that \emph{must} be present. If a sub-program with inferred specification $(\absimg^-, \absimg^+) $ ever produces a symbolic image that contains fewer objects than its over-approximation or more objects than its under-approximation, we know it must be incorrect. In our setting, this idea can be used to prune large parts of the search space, and we believe that our proposed abstraction could  be similarly useful in other program synthesis tasks involving images.}

\section{Domain-specific language for image manipulation}

In this section, we introduce our domain-specific language for image manipulations. Since inputs to programs in this DSL are images, we first explain how we represent images and then describe the constructs in this DSL.

\begin{figure}
\begin{minipage}{0.60\textwidth}
    \includegraphics[width=0.7\textwidth]{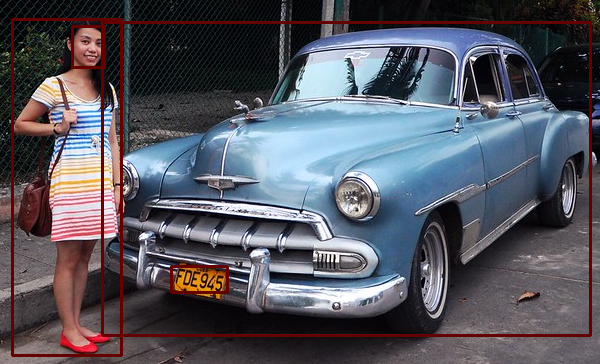}
\end{minipage}
\begin{minipage}{0.39\textwidth}
    \small
    $\absimg = \{(\prop_1, \bbox_1), (\prop_2, \bbox_2), (\prop_3, \bbox_3), (\prop_4, \bbox_4)$\} \\
    $\prop_1$ = \{ \texttt{objectType} $\rightarrow$ \texttt{person} \} \\
    $\prop_2$ = \{ \texttt{objectType} $\rightarrow$  \texttt{face}, \\ 
    \hspace*{20pt} \texttt{faceId} $\rightarrow$ \texttt{1}, \\ 
    \hspace*{20pt} \texttt{Smiling} $\rightarrow$ \texttt{true}, \\
    \hspace*{20pt} \texttt{EyesOpen} $\rightarrow$ \texttt{true}  \} \\
     $\prop_3$ = \{ \texttt{objectType} $\rightarrow$ \texttt{car} \} \\
    $\prop_4$ = \{ \texttt{objectType} $\rightarrow$ \texttt{text}, \\
    \hspace*{20pt} \texttt{textBody} $\rightarrow$ \texttt{"FDE945"} \} 
\end{minipage}
    \caption{An image and its corresponding symbolic image. Rectangles denote  bounding boxes.}
    \label{fig:abs_img_example}
\end{figure}

\mypar{Image representation}  A raw image $\image$ as a $n \times m$ matrix where each entry corresponds to a pixel. \revise{However, because raw images are quite low level, this work utilizes a more abstract representation called a \emph{symbolic image} for formalizing our DSL.} Specifically, given a raw image $\image$, we define a corresponding \emph{symbolic image} $\absimg$ as follows:

\begin{definition}{\bf (Symbolic image)}\label{def:image} A \emph{symbolic image} $\absimg$ is a set of \emph{objects} $o$ where each object is represented by a pair $(\prop, \bbox)$ where  $\prop$ is a mapping from the attributes of that object to their values, and $\bbox$ represents the location of the object within the raw image. For simplicity, we represent $
\bbox$ as a bounding box ($\idx_{\tt l}, \idx_{\tt r}, \idx_{t}, \idx_{b}$)  describing the left, right, top and bottom pixels. \end{definition}


Intuitively, a symbolic image $\absimg$ corresponds to a more abstract representation of $\image$ obtained through pre-trained neural networks used for classification (to construct $\prop$) and segmentation (used to construct $\bbox$). In particular, each element in the domain of $\prop$ is obtained using a different pre-trained neural network used for  classification. For example, consider an attribute called {\tt objectType} in the domain of $\prop$. This attribute identifies the type of the object, which could be a face, cat, dog, table etc. Some of the attributes are only defined for certain types of objects. For example, the boolean attribute {\tt Smiling} only makes sense for human faces. Thus, the domain of $\prop$ may be different across different objects. 
Given an object $o = (\prop, \bbox)$,  we use the notation $o.\prop$ and $o.\bbox$ to refer to $\prop$ and $\bbox$ respectively.

 \begin{example}
Consider the image in Figure \ref{fig:abs_img_example}, and its corresponding symbolic image $\absimg$ on the right. Here, $\absimg$ contains four objects: the person, their face, the car, and the text on the car's license plate. Each object has an attribute called \texttt{objectType}. The face object has the additional attributes \texttt{faceId}, \texttt{Smiling}, and \texttt{EyesOpen}, and the text object has the additional attribute \texttt{textBody} whose value is a string that contains the text on the license plate.  
 \end{example}

\revise{In the remainder of this paper, we use a single symbolic image to represent multiple raw input images. Because a symbolic image is a mapping from object identifiers to their attributes, a symbolic image can represent multiple raw images without any loss of information. This is because different occurrences of the same object in different images have different identifiers, and an attribute is used to track which object identifier originates from which image. This design choice of representing multiple raw images as a single symbolic image allows simplifying our technical presentation.}




\begin{figure}
    \centering
    \small
    \[
    \begin{array}{r l}
        \prog := & \{ \extractor \shortrightarrow \action, \cdots, \extractor \shortrightarrow \action\} \\
         \action := & {\sf Blur} \ | \ {\sf Blackout} \ | \ {\sf Sharpen} \ | \ {\sf Brighten} \ | \ {\sf Recolor} \ | \ {\sf Crop} \\
         \extractor := & {\sf All} \ | \ \revise{{\sf Is}(\attr)} \\
         | &  {\sf Complement}(\extractor) \ | \ {\sf Union}_N(E_1, \cdots, E_N) \ | \ {\sf Intersect}_N(E_1, \cdots, E_N) \\
         | & {\sf Find}(\extractor, \attr, \func) \ | \ {\sf Filter}(\extractor, \attr)  \\
          \attr := & {\sf Face}(N) \ | \ {\sf Object}(O) \ | \ {\sf Smiling} \ | \   \ {\sf AboveAge}(N) \ | \ {\sf Text}(W)  \ | \ \cdots \\
         \func := & {\sf GetLeft} \ | \  {\sf GetRight} \ | \  {\sf GetAbove} \ | \   {\sf GetBelow} \ |    \ {\sf GetParents} 
    \end{array}
    \]
    \vspace{-10pt}
    \caption{Image manipulation DSL. }
    \label{fig:dsl}
    \vspace{-0.2in}
\end{figure}

\mypar{DSL Syntax}
Our image editing DSL is defined in Figure \ref{fig:dsl} and is meant to  capture a broad class of selective edits
. At the top level, a program  is comprised of a set of guarded actions of the form $E \shortrightarrow A$, where $A$ is an action like {\sf Crop} or {\sf Blur} and  the guard $E$ is an \emph{extractor} that specifies what part of the image to apply that action to. Extractors are defined recursively and have two base cases: (1) the identity extractor \textsf{All} returns the entire image, and (2)  $\mathsf{Is}(\attr)$ returns all objects in the image  for which the predicate $\attr$ evaluates to true.  Extractors can be nested inside one another by composing them via set operators (complement, intersection, and union) as well as the constructs ${\sf Find}(E, \attr, f)$ and $\mathsf{Filter}(E, \attr)$.  The {\sf Find} construct first extracts a set of objects $O$ using the nested extractor $E$ and then, for each object $o \in O$, it  returns the first element in $f(o)$ satisfying predicate $\attr$. Here, $f$ is a function that takes as input an object and returns a sorted list of objects. For example, $\mathsf{GetRight}(o)$ returns a list of all objects that are to the right of $o$, sorted by their $x$ coordinate. Hence, the extractor $\mathsf{Find}(\mathsf{Is}(\mathsf{Face}(n)), \mathsf{Smiling}, \mathsf{GetRight})$ finds the first smiling face to the right of person $n$. As another example, $\mathsf{Find}(\mathsf{All}, \mathsf{Smiling}, \mathsf{GetLeft})$ would yield the set of all smiling faces that are to the left of \emph{some} object in the input image. Finally, the construct $\mathsf{Filter}(E, \attr)$ filters nested objects that satisfy predicate $\attr$. In particular, given a set of objects $O$ extracted via $E$,  $\mathsf{Filter}(E, \attr)$ returns all objects satisfying $\attr$  contained inside some object $o$ in $O$. For instance, the extractor $\textsf{Filter}(\textsf{Is}(\textsf{Object}(\texttt{car})), \textsf{Object}(\texttt{person}))$ will return all people who are inside of cars.

\begin{example}
Consider the image on the left in Figure \ref{fig:example_prog}. Given this image as input, the program 
\begin{align*}
\small
\begin{split}
\{\textsf{Inter}&\textsf{section}( \\ &\textsf{Find}(\textsf{Object}(\texttt{cat}), \textsf{Object}(\texttt{cat}), \textsf{GetRight}), \\
 &\textsf{Find}(\textsf{Object}(\texttt{cat}), \textsf{Object}(\texttt{cat}), \textsf{GetLeft})) \rightarrow \textsf{Blur}\}
 \end{split}
 \end{align*}
 will output the image on the right. Note that the extractor in this program yields all cat objects that have a cat  to their left and right. In other words, it extracts all cats that are between two other cats. 
\end{example}

\revise{Predicates in our DSL reflect the capabilities of state-of-the-art computer vision models for object recognition and classification. In particular, we choose to include certain predicates, such as $\mathsf{Face}(n)$ or $\mathsf{Object}(\texttt{cat})$, but not others (e.g., $\mathsf{Angry}$, $\mathsf{Sad}$), because existing neural networks are good at detecting the first class of features but not the latter class. Furthermore, the choice of built-in functions (e.g., $\textsf{GetLeft}$, $\textsf{GetAbove}$) is motivated by performing segmentation at the level bounding boxes. The remaining constructs  are either standard set operations (e.g., $\mathsf{Union}$) or well-understood functional combinators (e.g., $\mathsf{Filter}$). }


\begin{figure} 
\begin{minipage}[c]{0.35\textwidth}
\hspace*{30pt}
    \includegraphics[width=0.7\textwidth]{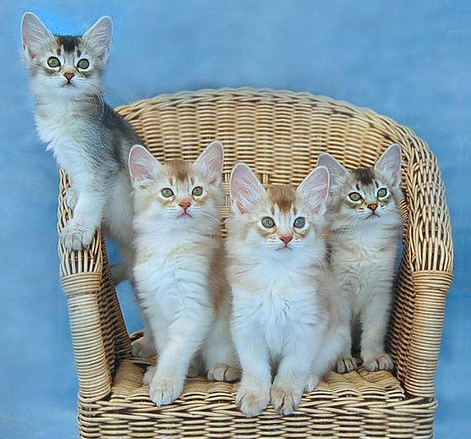}
\end{minipage}
\begin{minipage}[c]{0.02\textwidth}
\centering
\hspace*{8pt}
$\Longrightarrow$
\end{minipage}
\begin{minipage}[c]{0.35\textwidth}
\hspace*{30pt}
    \includegraphics[width=0.7\textwidth]{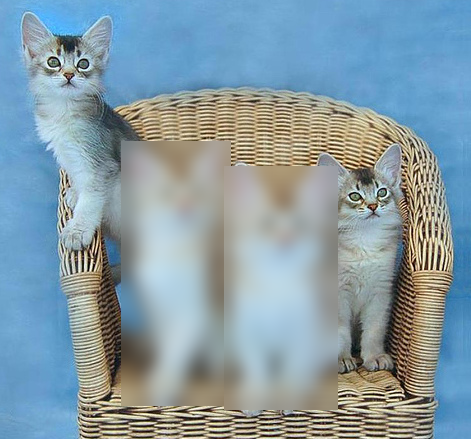}
\end{minipage}
\begin{minipage}[c]{0.25\textwidth}
\tiny
        \{\textsf{Intersection}( \\
                \hspace*{15pt} \textsf{Find}( \\
                \hspace*{25pt} \textsf{Object}(\texttt{cat}), \\
                \hspace*{25pt} \textsf{Object}(\texttt{cat}),  \\
                \hspace*{25pt} \textsf{GetRight}), \\
                \hspace*{15pt} \textsf{Find}( \\
                \hspace*{25pt} \textsf{Object}(\texttt{cat}), \\
                \hspace*{25pt} \textsf{Object}(\texttt{cat}), \\
                \hspace*{25pt} \textsf{GetLeft})) \\
             \hspace*{40pt} $\rightarrow$ \textsf{Blur}\} \\
\end{minipage}
    \caption{Input and output of a program.}
    \label{fig:example_prog}
\vspace*{-15pt}
\end{figure}

\begin{figure}
    \centering
    \small
    \begin{mathpar}
    \inferrule{ R \in \{ {\sf Smiling}, {\sf AreEyesOpen}, \cdots \} \\\\
    R \in {\sf Domain}(\imgobj.\prop) \ \ \ \ \imgobj.\prop[R] = {\sf True} }{\imgobj \models R} \ \  \ \ 
    \inferrule{R \in \{ {\sf BelowAge}, {\sf IsObject}, \cdots \} \\\\
    R \in {\sf Domain}(\imgobj.\prop) \ \ \ \ \imgobj.\prop[R] = C}{\imgobj \models R(C)}
    \end{mathpar}
    \vspace{-0.1in}
    \caption{Definition of the entailment relation.}
    \label{fig:attr}
    \vspace*{-0.5cm}
\end{figure}

\mypar{DSL Semantics}
The formal semantics of this DSL are presented in Figure~\ref{fig:semantics}. Given a program $P$ and input image $\image$, $\semantics{P}(\image)$ produces a new image $\image'$ by applying each of the actions in $P$ to the extracted sub-image. Similarly, given an  extractor $E$ and symbolic image $\absimg$, $\semantics{E}(\absimg)$ returns a set of objects contained in $\absimg$. Because each object stores its corresponding pixels in the original input image, note that it is trivial to convert a set of objects to pixels of the original image.  

\begin{figure}
    \centering
    \small
    \[
\begin{array}{r l}
    \bigsemantics{\extractor \shortrightarrow \action}(\image) = & {\sf ApplyAction}(\image, \action, \bigcup_{\imgobj \in \semantics{\extractor}(\absimg)} \imgobj.\bbox )\\
    \semantics{\extractor \shortrightarrow \action, \ \prog}(\image) = &  {\sf ApplyAction}(\bigsemantics{\prog}(\image), \action, \bigcup_{\imgobj \in \semantics{\extractor}(\absimg)} \imgobj.\bbox)\\
    \bigsemantics{{\sf All}}(\absimg) = & {\absimg}\\
    \bigsemantics{{\sf Is}(\attr)}(\absimg) = & \{ \imgobj \ | \ \imgobj \in \absimg \wedge \imgobj \models \attr \} \\
     \bigsemantics{{\sf Union}(\overline{E})}(\absimg) = & \bigcup_{\extractor \in \overline{E}} \bigsemantics{\extractor}(\absimg)\\
     \bigsemantics{{\sf Intersect}(\overline{E})}(\absimg) = & \bigcap_{\extractor \in \overline{E}} \bigsemantics{\extractor}(\absimg)\\
     \bigsemantics{{\sf Complement}(\extractor)}(\absimg) = & {\absimg} \setminus  \bigsemantics{\extractor}(\absimg) \\
     \bigsemantics{{\sf Filter}(\extractor, \attr)}(\absimg) =  & {\tt flatten}({\tt map}(\bigsemantics{\extractor}(\absimg), \lambda x. \ {\tt filter}(\bigsemantics{{\sf GetContents}}(x, \absimg), \attr)))) \\
     \bigsemantics{{\sf Find}(\extractor, \attr, \func)}(\absimg) =  & {\tt map}(\bigsemantics{\extractor}(\absimg), \lambda x. \ \bigsemantics{\func_\attr(x)}(\absimg)) \\ 
     \bigsemantics{f_\attr(\imgobj)}(\absimg) = & 
        \begin{cases}
            S[i]  & {\tt if \ } \exists_{0 \leq i < |S|} {S[i]} \models \attr  \bigwedge \forall_{0 \leq j < i} \ {S[j]} \not\models \attr \\
            {\sf None} & {\tt otherwise}
        \end{cases}
      \ \ \ {\tt where \ } S =  \bigsemantics{f}(\imgobj, {\absimg}) \\
\end{array}
    \]
    \vspace{-10pt}
    \caption{DSL semantics. Here,  ${\tt map}(S, f)$ takes the input $S: {\sf Set}[T]$, $f: T \shortrightarrow {\sf Option}[T]$ and returns $\{f(s) \ | \ s \in S \wedge f(s) \neq {\sf None}\}$. {\tt flatten} takes in a set of sets $S_{\sf All} = \{S_1, \cdots S_n\}$ and returns a set that containing $\bigcup_{S_i \in S_{\sf All}} \cup_{s \in S_i} s$. Finally, {\tt filter} is the standard filter operator.}
    \label{fig:semantics}
    \vspace{-10pt}
\end{figure}

The  semantics of  extractors  are defined in terms of symbolic images introduced in Definition~\ref{def:image}. In particular, given an image $\image$, $\mathsf{Is}(\attr)$ returns the set of all objects in $\absimg$ that satisfy $\attr$. As defined in Figure~\ref{fig:attr}, an object $o$ satisfies a predicate of the form $R(C)$, denoted $o \models R(C)$,  if $o.\prop$ contains an attribute called $R$ and the  value of that attribute is $C$. Similarly, if $R$ is a  nullary relation, we have $o \models R$ iff $o$ has an attribute called $R$ whose value is \emph{true}. 

\begin{figure}
    \centering
    \small
    \[
    \begin{array}{r l}
    \bigsemantics{{\sf GetRight}}(\imgobj, \absimg) = & {\sf Sort}(\{ \imgobj' \ | \ \imgobj' \in \absimg \wedge \imgobj'.\bbox[\idx_{\sf l}] \ge \imgobj.\bbox[\idx_{\sf l}] \}, o.\bbox[\idx_{\sf {l}}]) \\
     \bigsemantics{{\sf GetLeft}}(\imgobj, \absimg) = & {\sf SortReverse}(\{ \imgobj' \ | \ \imgobj' \in \absimg \wedge \imgobj'.\bbox[\idx_{\sf r}] \le \imgobj.\bbox[\idx_{\sf r}] \}, o.\bbox[\idx_{\sf {r}}]) \\
     \bigsemantics{{\sf GetAbove}}(\imgobj,\absimg) = & {\sf Sort}(\{ \imgobj' \ | \ \imgobj' \in \absimg \wedge \imgobj'.\bbox[\idx_{\sf t}] \ge \imgobj.\bbox[\idx_{\sf t}] \}, o.\bbox[\idx_{\sf {t}}]) \\
     \bigsemantics{{\sf GetBelow}}(\imgobj, \absimg) = & {\sf SortReverse}(\{ \imgobj' \ | \ \imgobj' \in \absimg \wedge \imgobj'.\bbox[\idx_{\sf b}] \ge \imgobj.\bbox[\idx_{\sf b}] \}, o.\bbox[\idx_{\sf {b}}]) \\
     \bigsemantics{{\sf GetParents}}(\imgobj, \absimg) = & {\sf Sort}(\{ \imgobj' \ | \ \imgobj' \in \absimg \wedge \bigsemantics{{\sf Contains}}(\imgobj'.\bbox, \imgobj.\bbox) \}, {\sf GetSize}(\bbox)) \\
     \bigsemantics{{\sf GetContents}}(\imgobj, \absimg) = & [ \imgobj' \ | \ \imgobj' \in \absimg \wedge \bigsemantics{{\sf Contains}}(\imgobj.\bbox, \imgobj'.\bbox) \wedge  \\
     & \not\exists_{o'' \in \absimg}. \ o'' \neq o \wedge \bigsemantics{{\sf Contains}}(o.\bbox, o''.\bbox) \wedge \bigsemantics{{\sf Contains}}(o''.\bbox, o'.\bbox)  ] \\
     \bigsemantics{{\sf Contains}}(\imgobj.\bbox, \imgobj'.\bbox) = & {\sf True} \text{ if } \\
     & \imgobj'.\bbox[j_{\sf l}] \geq \imgobj.\bbox[j_{\sf l}] \wedge \imgobj'.\bbox[j_{\sf r}] \leq \imgobj.\bbox[j_{\sf r}] \\
     &  \wedge \imgobj'.\bbox[j_{\sf t}] \geq \imgobj.\bbox[j_{\sf t}] \wedge
     \imgobj'.\bbox[j_{\sf b}] \leq \imgobj.\bbox[j_{\sf b}] \text{ else } {\sf False} \\
     \bigsemantics{{\sf GetSize}}(\bbox) = & (\bbox[j_{\sf r}] - \bbox[j_{\sf l}]) * (\bbox[j_{\sf b}] - \bbox[j_{\sf t}])
    \end{array}
     \]
     \vspace{-10pt}
    \caption{Semantics for the built-in and auxiliary functions $f$ in the DSL. ${\sf Sort}(S, {\sf key})$ sorts the objects in set $S$ from smallest to largest with respect to {\sf key}. ${\sf SortReverse}$ does the opposite. }
    \label{fig:func}
    \vspace*{-0.5cm}
\end{figure}

Since the semantics of set operators are standard, we only explain the semantics of  $\mathsf{Filter}$ and $\mathsf{Find}$, which are defined in terms of  functional combinators like {\tt map} and {\tt flatten}. Recall that the $\mathsf{Find}$ extractor is parameterized over a function $f$, such as $\mathsf{GetRight}$ and $\mathsf{GetBelow}$,  whose semantics are given in Figure~\ref{fig:func}. In particular, $\semantics{\mathsf{GetX}}(o, \absimg)$ yields a list of all objects $o'$ in $\absimg$ satisfying the \emph{spatial} relationship $X(o', o)$. As expected, the semantics of these functions are defined using the bounding box $\bbox$ of each object in the image. For example, given an object $o$ in $\absimg$, $\mathsf{GetRight}$ decides which objects are to the right of $o$ based on the leftmost pixels of the bounding box of each object. 

As shown in Figure~\ref{fig:semantics}, the $\mathsf{Find}$ construct first evaluates its nested extractor $E$ on the input image $\absimg$ to obtain a set of objects $O$ and applies the function $f_\varphi$ to each object  $o \in O$. The semantics of $f_\varphi(x)$ are given at the very bottom of Figure~\ref{fig:semantics} and essentially yield the first object satisfying $\varphi$ in  the list given by $f(x)$. Since $f(x)$ may be the empty list or may not have any elements satisfying $\varphi$, observe that $f_\varphi(x)$  can yield $\mathsf{None}$, which is discarded when constructing the output of $\mathsf{Find}$. 

Finally, we explain the semantics of the $\mathsf{Filter}$ construct, which first evaluates its nested extractor $E$ on the input image $\absimg$ to obtain a set of elements of $O$. Then, for each object $o \in O$, it obtains elements nested  inside of $o$ (by inspecting the bounding box of each object in the image) and only retains those elements that satisfy $\attr$. The final output of $\mathsf{Filter}$  is obtained by flattening the resulting set of sets into a single set.

\section{Problem Statement}\label{sec:statement}

In this section, we define the synthesis problem that we address in the remainder of the paper.  We first start by introducing the concept of an \emph{image edit}:

\begin{definition}{\bf (Edit)} Given an image $\image$, an edit $\edit$ on that image is a mapping from objects in $\absimg$ to a list of actions that have been applied to those objects.
\end{definition}
Given an image $\image$ and edit $\edit$, we use the notation $\image[\edit]$ to denote the resulting image obtained by applying $\edit$ to $\image$. 
The specification for our synthesis problem is defined in terms of edits:

\begin{definition}{\bf (Spec)} 
An \emph{image manipulation specification} $\imdspec$ is a mapping from images to edits. 
\end{definition}

We  now formally state our synthesis problem as follows:

\begin{definition}{\bf (Image manipulation by demonstration (IMBD))}
Given an image manipulation specification $\imdspec$, the goal of \emph{image manipulation by demonstration} is to produce a program $P$ in the DSL from Figure~\ref{fig:dsl} such that $
\forall (\image, \edit) \in \imdspec. \  P(\image) = \edit[\image] 
$.

\end{definition}

\section{Synthesis Algorithm}\label{sec:synthesis}

\begin{figure}[!t]
\vspace*{-0.5cm}
    \centering
    \small
    \begin{algorithm}[H]
    \begin{algorithmic}[1]
    \Procedure{Synthesize}{$\imdspec$}
    \Statex\Input{specification $\imdspec$}
    \Statex\Output{a program $\prog$ such that $\forall (\image, \edit) \in \imdspec. \  \prog(\image) = \edit[\image]$}
    \State $\prog \assign \emptyset$ 
    \ForAll{$\action \in \mathsf{Actions}$}
    \State $\absimg_{in} \assign  \bigcup_{(\image, \edit) \in \imdspec} \absimg$
    \State $\absimg_{out} \assign \{\imgobj \ | \ (\image, \edit) \in \imdspec \wedge \imgobj \in {\sf Domain}(\edit) \wedge \action \in \edit[\imgobj] \}$
    \If{$\absimg_{out} \neq \emptyset$}
    \State $\extractor \assign$ {\sc SynthesizeExtractor}$(\absimg_{in}, \absimg_{out})$
    \If{$\extractor = \bot$}
    \State \Return $\bot$
    \Else 
    \State $\prog \assign \prog \cup \{ \extractor \shortrightarrow \action \} $
    \EndIf
    \EndIf
    \EndFor
    \State \Return $\prog$
    \EndProcedure
    \end{algorithmic}
    \end{algorithm}
    \vspace{-0.5in}
    \caption{Top-level synthesis algorithm}
    \label{fig:top-level}
    \vspace*{-0.5cm}
\end{figure}


In this section, we describe our synthesis algorithm for solving the IMBD problem defined in the previous section. Our top-level learning procedure is shown in Figure~\ref{fig:top-level} and works as follows. For each possible action in the DSL, it constructs an input-output example $(\absimg_{in}, \absimg_{out})$ based on the specification $\imdspec$. If, for some action $\action$,  $\absimg_{out}$ is the empty set, this means that $\action$ is irrelevant to the target task, so the algorithm moves on to the next action. Otherwise, it invokes the {\sc SynthesizeExtractor} procedure on $(\absimg_{in}, \absimg_{out})$  to learn the corresponding extractor $E$ for   $\action$ and adds the guarded action $E \shortrightarrow \action$ to the synthesized program.

\subsection{Preliminaries}
As is evident from this discussion, the central part of our technique is the extractor learning algorithm, which relies on a particular representation of \emph{partial programs}:

\begin{definition}{\bf (Partial program)}
A partial program $\prog$ is a tree $(V, E, \mstmt, \mgoal)$ with nodes $V$ (including a special root node $v_0$) and directed edges $E$. The mapping $\mstmt$ maps each node in $V$ to a label, which is either a construct in our image manipulation DSL (e.g., \textsf{All}, \textsf{Complement}) or the special symbol $\hole$ representing a hole. Each node $v \in V$ is also annotated with a \emph{goal}  $\nspec$ such that $\mgoal(v) = \nspec$. We write $\prog \vdash v: (\nlabel, \nspec)$ to denote that $\mstmt(v) = \nlabel$ and $\mgoal(v) = \nspec$.   If none of the nodes in $\prog$ is labeled with a hole, we refer to $\prog$ as a \emph{complete program}. 
\end{definition}

\begin{example} 
Figure \ref{fig:ast} depicts the partial program \textsf{Union(Is(Smiling), $\hole$)} as a tree, where each node is annotated with its corresponding label. 

\end{example}

The goal annotation $\mgoal(v)$ for  each node $v$ in a partial program imposes constraints on the semantics of the subtree rooted at $v$.  In our context, a goal is defined as follows:

\begin{definition}{\bf (Goal annotation)}
A \emph{goal annotation} (or \emph{goal} for short) of a node in the partial program is a pair $(\absimg^-, \absimg^+)$ where $\absimg^-, \absimg^+$ are symbolic images \revise{corresponding to over- and under-approximations of the output}.
\end{definition}

Next, we define the consistency between a symbolic image and a goal as follows:

\begin{definition}{\bf (Consistency with goal)}\label{def:consistent}
We say that a symbolic image $\absimg$ is consistent with a goal $\nspec = (\absimg^-, \absimg^+)$, denoted $\absimg \sim \nspec$, iff 
$\absimg^- \subseteq \absimg \subseteq \absimg^+$.
\end{definition}

We also extend this notion of consistency to partial programs:

\begin{definition}{\bf (Consistency of partial program)}\label{def:consistent}
A partial program $\prog$ is consistent with a symbolic image $\absimg$ iff,  for every complete subtree $\prog_v$ of $P$ rooted at node $v$, we have $\semantics{\prog_v}(\absimg) \sim \Pi(v)$.  
\end{definition}

Intuitively, the goals annotating a partial program are used for guiding extractor synthesis and for ensuring that we never enumerate inconsistent partial programs. 

\begin{example}
Consider again the partial program from Figure~\ref{fig:ast}, which contains a complete subprogram, namely \textsf{Is}(\textsf{Smiling}), rooted at node $v_1$. Suppose that we have an input-output example $(\absimg_{in},\absimg_{out})$, where $\absimg_{in}$ contains several face objects. Then $\semantics{\textsf{Is}(\textsf{Smiling})}(\absimg_{in})$ will be a symbolic image $\absimg'$ containing just the faces that are smiling. Further, suppose our output symbolic image $\absimg_{out}$ contains just the faces that are smiling or have their eyes open. The goal annotation of $\textsf{Is}(\textsf{Smiling})$ will be $\Pi(v_1) = (\emptyset, \absimg_{out})$, as explained later in Section \ref{sec:goalinference}. Since $\emptyset \subseteq \absimg' \subseteq \absimg_{out}$, we have $\semantics{\textsf{Is}(\textsf{Smiling})}(\absimg_{in}) \sim \Pi(v_1)$. Therefore, this partial program is consistent. To illustrate inconsistent partial programs, now consider  \textsf{Union}(\textsf{Is}(\textsf{Object}(\texttt{cat})), $\hole$) and the same input-output example $(\absimg_{in}, \absimg_{out})$. Here,  $\semantics{\textsf{Is}(\textsf{Object}(\texttt{cat}))}(\absimg_{in})$ will be a symbolic image $\absimg'$ containing all cat objects. The goal annotation of $\textsf{Is}(\textsf{Object}(\texttt{cat}))$ will again be $\Pi(v_1) = (\emptyset, \absimg_{out})$. Since $\absimg' \not\subseteq \absimg_{out}$, $\semantics{\textsf{Is}(\textsf{Object}(\texttt{cat}))}(\absimg_{in}) \not\sim \Pi(v_1)$. Therefore, this partial program is inconsistent.

\end{example}

Because our synthesis algorithm gradually replaces holes with concrete programs, we conclude this section by defining an operation to update partial programs:

\begin{definition}{\bf (Partial program update)}
Given a partial program $\prog = (V, E, \mstmt, \mgoal)$, we use the notation $\prog[v_0 \annot (\nlabel_0, \nspec_0), \ldots v_n \annot (\nlabel_n, \nspec_n) ]$ to indicate the new partial program \[
\prog' = (V \cup \bigcup_{i=0}^ n \{ v_i\}, \  E \cup \bigcup_{i=1}^n \{(v_0, v_i)\}, \mstmt[v_0 \mapsto \nlabel_0, \ldots v_n \mapsto \nlabel_n], \mgoal[v_0 \mapsto \nlabel_0, \ldots v_n \mapsto \nlabel_n])\]

\end{definition}

In other words, the notation $\prog[v_0 \annot (\nlabel_1, \nspec_1), \ldots v_n \annot (\nlabel_n, \nspec_n) ]$  corresponds to adding children $(v_1, \ldots, v_n)$ of $v_0$ (and their corresponding labels and goals) and updating the label and goal of $v_0$.  Finally, we write $\mathsf{CreateProg}(v, \nlabel, \nspec)$ to denote the creation of a partial program with a single node $v$ with label $\nlabel$ and goal annotation $\nspec$.



\subsection{Top-Level Extractor Learning Algorithm}\label{sec:top-level}

\begin{figure}[!t]
    \centering
    \small
    \vspace*{-0.5cm}
    \begin{algorithm}[H]
    \begin{algorithmic}[1]
    \Procedure{SynthesizeExtractor}{$\absimg_{in}, \absimg_{out}$}
    \Statex\Input{ $\absimg_{in}$ is an input symbolic image and $\absimg_{out}$ is the output symbolic image}
    \Statex\Output{an extractor program $\prog$ such that $\bigsemantics{\prog}(\absimg_{in}) \equiv \absimg_{out}$}
    \State $\worklist \assign \{ \mathsf{CreateProg}(v_0, \hole, (\absimg_{out}, \absimg_{out})) \}$
    \While{$\worklist \neq \emptyset$}
    \State $\prog \assign \worklist.{\sf remove}()$
    \If{{\sf isComplete}($\prog$)}
    \If {$\bigsemantics{\prog}(\absimg_{in}) \equiv \absimg_{out}$} \Return $\prog$
    \EndIf
    \Else 
    \State $v \assign {\sf SelectOpenNode}(\prog)$;
    \ForAll{$\prog' \in \textsc{Expand}(\prog, v)$}
    \State $\prog''\assign ${\sc PartialEval}$(\prog', \absimg_{in})$
    \If{$\prog'' \neq \bot  \ \wedge \  \neg ${\sc Reducible}$(\prog'')$}
    \State $\worklist \assign \worklist \cup \{\prog'\}$;
    \EndIf
    \EndFor
    \EndIf
    \EndWhile
    \State \Return $\bot $
    \EndProcedure
    \end{algorithmic}
    \end{algorithm}
    \vspace{-0.5in}
    \caption{Extractor synthesis algorithm. }
    \label{fig:extractor}
    \vspace*{-0.5cm}
\end{figure}

We now present our top-level extractor learning algorithm, which is shown in Figure~\ref{fig:extractor}. \revise{Given an input symbolic image $\absimg_{in}$ and an output symbolic image $\absimg_{out}$,} this algorithm maintains a worklist $\worklist$ of partial programs and iteratively adds to this list. At the beginning of the procedure, $\worklist$ is initialized to  a single program with one node $v_0$ and no edges. Node $v_0$ has label $\hole$ and goal output $\phi_0 = (\absimg_{out}, \absimg_{out})$. The loop in lines 3-12 dequeues a program $\prog$ from the worklist and processes it. The worklist keeps programs in ascending order first by AST size, then by AST depth. If $\prog$ is complete and satisfies the correctness condition, the procedure terminates and returns $P$. Otherwise, \textsc{SynthesizeExtractor} calls \textsc{Expand} on line 9 to generate a new set of partial programs by expanding an open node $v$ in $P$. As we describe in Section \ref{sec:goalinference}, the expansion procedure also infers goals for each new hole in the partial program. 

\begin{wrapfigure}{r}{0.4\textwidth}
\vspace{-25pt}
        \includegraphics[width=0.4\textwidth]{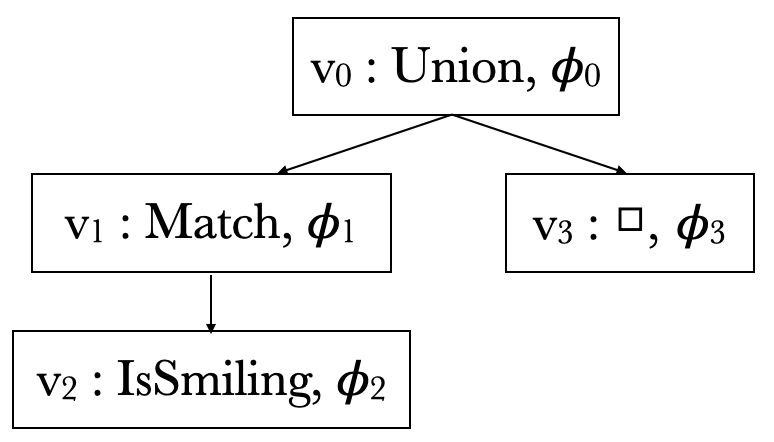}
\vspace{-10pt}
    \caption{An illustration of a partial program.}
    \label{fig:ast}
    \vspace{-10pt}
\end{wrapfigure}


Next, for each expansion $P'$ of $P$, the algorithm calls \textsc{PartialEval} on line 10 to generate a partially evaluated program $P''$. The partial evaluation procedure identifies each complete subprogram $P_v$ of $P'$, evaluates it on the input image $\absimg_\emph{in}$ to obtain an output image $\absimg$, and replaces $P_v$ with the constant $\absimg$. As we discuss in more detail in Section \ref{section:partialeval}, {\sc PartialEval} can return $\bot$ if it finds any inconsistent subprograms; in this case, \revise{$\prog'$} is \emph{not} added to the worklist.


If partial evaluation does not return $\bot$, the  algorithm calls the \textsc{Reducible} procedure on line 11, which is used to check whether $P''$ can be simplified.  Since {\sc SynthesizeExtractor} explores programs in increasing order of complexity, we know that $P''$ is redundant if  \textsc{Reducible} returns $\top$. Hence, the algorithm  adds $P''$ to the worklist only if $P''$ cannot be further simplified. 



\subsection{Goal Inference} \label{sec:goalinference}

As mentioned earlier, a key component of our extractor learning approach is the inference of goals for each node. This goal inference method is presented in Figure~\ref{fig:expand} as part of the {\sc Expand} procedure. Every time the algorithm expands a hole  associated with node $v$, it picks an (n-ary) DSL operator $\sf{f}$, updates $v$'s label to $\sf{f}$, and adds $n$ new children $v_1, \ldots, v_n$ of $v$. 
Each child node $v_i$ of $v$ is marked as being ``open" (i.e., labeled with a hole) and is annotated with its corresponding goal. Observe that goal inference is performed using the function $\infgoal{\sf{f}}$, which takes as input a goal annotation $\nspec$ and produces a new goal $\nspec_f$ for the arguments of the function $\sf{f}$.

\begin{figure}[!t]
    \centering
    \small
    \begin{mathpar}
\inferrule{
  P \vdash v: (\hole, \nspec) \ \ \ \ \ \nspec_f = \infgoal{{\sf f}}(\nspec)
   }
  {{\textsc{Expand}}(P, v) = \{\prog[v \annot ({\sf f}, \nspec), v_1 \annot (\hole, \nspec_f),  \ldots, v_n \annot (\hole, \nspec_f)] \ | \ {\sf f} \in \mathcal{F}, \overline{v} \text{ fresh}\} } 
  \\\\
  \end{mathpar}
  \vspace{-30pt}
    \[
  \begin{array}{rl}
        \infgoal{{\sf Union}}(\absimg^-, \absimg^+) & = (\emptyset, \absimg^+)\\
        \infgoal{{\sf Intersect}}(\absimg^- \absimg^+) & = (\absimg^-, \absimg_{in})\\
        \infgoal{{\sf Complement}}(\absimg^-, \absimg^+) & = (\absimg_{in}\setminus\absimg^+, \absimg_{in} \setminus \absimg^-) \\
        \infgoal{{\sf FilterContents}}(\absimg^-, \absimg^+) & = (\emptyset, \absimg_{in}) \\
        \infgoal{{\sf Find}}(\absimg^-, \absimg^+) & = (\emptyset, \absimg_{in}) 
  \end{array}
  \]
  \vspace{-10pt}
    \caption{Inference rules for {\sc Expand}. $\absimg_{in}$ is the input symbolic image, and $\mathcal{F}$ represents DSL functions.}
    \label{fig:expand}
    \vspace*{-0.5cm}
\end{figure}

Recall that a goal annotation is of the form $(\absimg^-, \absimg^+)$  where $\absimg^-$ and $\absimg^+$ are symbolic images under- and over-approximating the image objects associated with the subprogram rooted at that node. In more detail, if a node $v$ has the goal annotation $(\absimg^-, \absimg^+)$, the consistency requirement from Definition~\ref{def:consistent} stipulates that, in order for $P$ to be consistent with the input symbolic image $\absimg_{in}$, the subprogram $P_v$ rooted at node $v$ must produce a set of objects that is a superset of $\absimg^-$ when executed on the input image $\absimg_{in}$. Similarly, it also requires that $\semantics{P_v}(\absimg_{in})$ is a subset of $\absimg^+$. Hence, given a goal $\nspec$ on the output of a DSL operator $\mathsf{f}$, goal inference aims to propagate under- and over-approximations to each of $\mathsf{f}$'s arguments. Put simply, the goal annotations approximate the output that a program must have in order for its parent program to also have a valid output, so programs that do not match their goal annotation can be safely pruned from the search space.  We formalize this notion in the following theorem.\footnote{Proofs are provided in the Appendix of the extended version of the paper ~\cite{imageeye}.}

\begin{theorem} \label{thm:consistency}
Let $P$ be a partial program derived by \textsc{SynthesizeExtractor} whose root node has goal annotation $(\absimg_{out}, \absimg_{out})$. If $P$ is not consistent with $\absimg_{in}$, then for any completion $P'$ of $P$, $\bigsemantics{P'}(\absimg_{in}) \not\equiv \absimg_{out}$.
\end{theorem}

The proof of this theorem  crucially relies on the correctness of the goal inference rules,  which we explain in more detail next. \\

\noindent
{\bf \emph{Union.}} Consider the DSL expression $E = \mathsf{Union}(E_1, \ldots, E_n)$, and suppose that the goal annotation for this expression is $(\absimg^-, \absimg^+)$. Since the over-approximation for the whole expression is $\absimg^+$, the operands of $\mathsf{Union}$ should not produce objects that are not in $\absimg^+$. Hence, the over-approximation for each operand is also $\absimg^+$. In other words, if \emph{any} operand outputs an object $o$ that is not in $\absimg^+$, then $E$ will output $o$ as well, which is not valid. In contrast, the only safe under-approximation we can infer for the operands is $\emptyset$, as there is no particular object $o$ that each operand \emph{must} output in order for $E$ to output all objects in $\absimg^-$.
\\

\noindent
{\bf \emph{Intersect.}} Consider the expression $E = \mathsf{Intersect}(E_1, \ldots, E_n)$, and suppose that the goal annotation for this expression is $(\absimg^-, \absimg^+)$. By the semantics of \textsf{Intersect}, for $E$ to output each $o \in \absimg^-$, each operand $E_i$ must also output $o$. Thus, the under-approximation for each $E_i$ is also $\absimg^-$. In contrast, we cannot deduce anything about elements that must \emph{not} be in the output of any $E_i$; thus, the over-approximation for the operands is the entire input image $\absimg_{in}$. \\

\noindent
{\bf \emph{Complement.}} Consider the expression $E = \mathsf{Complement}(E')$. If $E$ must produce an object $o$ (i.e., $o \in \absimg^-)$, then $E'$ must not produce it. Hence, the over-approximation for $E'$ is $\absimg_{in}\backslash \absimg^-$. In contrast, if $E$ must not produce an object $o$ (i.e., $o \not \in \absimg^+$), then $o$ must be produced by $E'$. Hence, the under-approximation is $\absimg_{in} \backslash \absimg^+$. \\

\noindent 
{\bf \emph{Find, Filter.}} In the case of the {\sf Find} and {\sf Filter} constructs, we cannot propagate meaningful approximations to the nested extractors, resulting in the trivial goal annotation $(\emptyset, \absimg_{in})$. To gain intuition about why this is the case, consider the expression $\mathsf{Find}(E', \varphi, {\sf GetLeft})$. For any object $o \in \absimg_{in}$, it could be the case that there is no object located to the left of $o$ in image $\image_{in}$, meaning that $\bigsemantics{\textsf{GetLeft}}(o, \absimg_{in})$ will be empty. In other words, if $o$ is output by $E'$, it will have no impact on the output of $E$. Hence, any object in $\absimg_{in}$ \emph{could} be output by $E'$, which is why the over-approximation is $\absimg_{in}$. For similar reasons, we also cannot infer any sound under-approximation other than $\emptyset$.



\begin{example}\label{ex:goal_infer}
Consider the image $\absimg$ from Figure \ref{fig:abs_img_example}, and  let $\absimg_{out} = \{(\prop_4, \bbox_4)\}$ be the output symbolic image containing only  the license plate. Now consider the partial program:
\[
\textsf{Union}(\textsf{Complement}(\textsf{Is}(\textsf{Object}(\texttt{car}))), \hole)
\]
whose top-level goal is $(\absimg_{out}, \absimg_{out})$. Since the subprogram $\textsf{Complement}(\textsf{Is} (\textsf{Object}(\texttt{car})))$ is an operand of a \textsf{Union}, it has goal $(\emptyset, \absimg_{out})$. Further, the subprogram   $\textsf{Is}(\textsf{Object}(\texttt{car}))$ is the operand of a \textsf{Complement}, so it has goal $(\{ (\prop_1, \bbox_1), (\prop_2, \bbox_2), (\prop_3, \bbox_3) \}, \absimg)$.

\end{example}



\subsection{Partial Evaluation} \label{section:partialeval}

As stated earlier, our synthesis algorithm performs partial evaluation to amplify the power of goal-directed reasoning as well as  equivalence reduction. In particular, given a partial program $\prog$, the {\sc PartialEval} procedure invoked  in Figure~\ref{fig:extractor} returns another partial program $\prog'$ by evaluating the complete subprograms of $\prog$ on the input. Partial evaluation can also reveal that  $\prog$ is infeasible; in  this case, {\sc PartialEval} returns $\bot$ to indicate that $\prog$ violates  consistency (Definition~\ref{def:consistent}).

\begin{figure}
    \centering
    \small
    \begin{mathpar}
    \inferrule*[Left=Hole]{{\sf Root}(\prog) = v \\ \prog \vdash v : (\hole, \nspec) 
    }
    {\absimg_{in} \vdash \prog \leadsto \prog}\\
    \inferrule*[Left=Const]{{\sf Root}(\prog) = v \\ \prog \vdash v : (\absimg, \nspec) 
    }
    {\absimg_{in} \vdash \prog \leadsto \prog}\\
\inferrule*[Left=Complete]{
    {\sf IsComplete}(\prog) \\ \bigsemantics{\prog}(\absimg_{in}) = \absimg \\
    {\sf Root}(\prog) = v \\ \prog \vdash v : (l, \nspec)}{\absimg_{in} \vdash \prog \leadsto (\absimg \sim \nspec) \ ? \ {\sf CreateProg}(v, \absimg, \nspec)  : \bot} \\
\inferrule*[Left=Partial]{\neg{\sf IsComplete}(\prog)  \\ {\sf Root}(\prog) = v \\ \prog \vdash v : (l, \nspec)\\\\
   {\sf Children}(\prog, v) = \{v_1, \ldots, v_n \}  \\ \absimg_{in} \vdash {\sf Subtree}(\prog, v_i) \leadsto \prog_i 
    }{\absimg_{in} \vdash \prog \leadsto (\forall_i. \ \prog_i \neq \bot) \ ? \ \prog[\prog_1/{\sf Subtree}(\prog, v_1), \cdots, \prog_n / {\sf Subtree}(\prog, v_n)]: \bot}\\
    \end{mathpar}
    \vspace{-35pt}
    \caption{Rules for {\sc PartialEvaluation}. $\prog[\prog_i/{\sf Subtree}(\prog, v_i)]$ represents replacing the subprogram of $\prog$ rooted at node $v_i$ with the new subprogram $\prog_i$.}
    \label{fig:partialeval}
    \vspace{-10pt}
\end{figure}

We present our {\sc PartialEval} procedure using the inference rules summarized in  Figure \ref{fig:partialeval}. The first rule, labeled {\sc Hole}, states that open nodes cannot be evaluated, as they represent a completely unconstrained program. The second rule, labeled {\sc Const}, states that constants simply evaluate to themselves. The third rule, labeled {\sc Complete}, evaluates complete subprograms by executing them on the input. If the resulting output $\absimg$ is inconsistent with the goal annotation $\nspec$, partial evaluation yields $\bot$; otherwise, it produces the constant $\absimg$. The final rule, labeled {\sc Partial},  applies to incomplete programs and recursively applies {\sc PartialEval} to each subprogram rooted at the root node. If any of these subprograms are inconsistent, then the whole program is also inconsistent, and the algorithm returns $\bot$. Otherwise, it constructs a new partial program where each subprogram $P_i$ of the root node is replaced with its  partially evaluated version $P_i'$. 

\begin{example}
Consider the program $\textsf{Union}(\textsf{Complement}(\textsf{Is}( \textsf{Object}(\texttt{car}))), \hole)$
from Example \ref{ex:goal_infer} and the desired output image containing just the license plate (i.e., $\{ (\prop_4, \bbox_4)\}$). This program is incomplete, so the \textsc{Partial} rule will recursively apply \textsc{PartialEval}. The subprogram $\textsf{Complement}(\textsf{Is}(
\textsf{Object}(\texttt{car}))$ is complete, so the \textsc{Complete} rule will evaluate this subprogram on the input symbolic image $\absimg$ to obtain $\absimg' = \{ (\prop_1, \bbox_1), (\prop_2, \bbox_2), (\prop_4, \bbox_4) \}$. Recall also (from Example \ref{ex:goal_infer}) that the goal of this subprogram is $ (\emptyset, \{(\prop_4, \bbox_4)\})$. Since $\absimg' \not\subseteq \{ (\prop_4, \bbox_4) \} $, $\absimg'$ is not consistent with the goal, so partial evaluation will return $\bot$.
Intuitively, this program should be pruned because, no matter how we instantiate the hole, the top-level program will always produce objects (e.g., the human face) that are not part of the desired output image. 

\end{example}

\subsection{Equivalence Reduction} \label{section:equivreduction}

We conclude this section by describing our equivalence reduction technique for identifying redundant partial programs. In particular, recall that a partial program $\prog'$ is \emph{redundant} with respect to another partial program $\prog$ if, for every completion $C'$ of $\prog'$, there is a corresponding completion $C$ of $\prog$ such that $C$ and $C'$ produce the same output on the input examples.  In other words, because such partial programs $\prog, \prog'$ are observationally equivalent on the inputs of interest, it suffices to merge them into one equivalence class. Thus, our technique can be viewed as extending the notion of \emph{
observational equivalence} from complete to partial programs. \begin{figure}
\centering
\[
    \small
    \begin{array}{r l}
     {\sf Union}(A, A) \rewrite A & {\sf Intersect}(A, A)  \rewrite A \\
    {\sf Union}(A, {\sf Intersect}(A, B)) \rewrite A & {\sf Intersect}(A, {\sf Union}(A, B)) \rewrite A \\
        \textsf{Union}(A, B) \rewrite B \text{ if $A \subseteq B$.} & 
     \textsf{Intersect}(A, B) \rewrite A \text{ if $A \subseteq B$.} \\
    {\sf Complement}({\sf Complement} (A)) \rewrite A & {\sf Union}(B, A)  \rewrite {\sf Union}(A, B) \\ 
    {\sf Intersect}(B, A) & \rewrite {\sf Intersect}(A, B) \\
    {\sf Union}({\sf Complement}(A), {\sf Complement}(B)) & \rewrite {\sf Complement}({\sf Intersect}(A, B)) \\ 
    {\sf Intersect}({\sf Complement}(A), {\sf Complement}(B)) & \rewrite {\sf Complement}({\sf Union}(A, B)) \\ 
    {\sf Union}({\sf Intersect}(A, B), {\sf Intersect}(A, C)) & \rewrite {\sf Intersect}(A, {\sf Union}(B, C)) \\ 
    {\sf Intersect}({\sf Union}(A, B), {\sf Union}(A, C)) & \rewrite {\sf Union}(A, {\sf Intersect}(B, C))  
    \end{array}
\]
\vspace{-15pt}
    \caption{Rewrite rules. }
    \label{fig:rewrite-rules}
    \vspace{-10pt}
\end{figure}

At a high level, there are two key components of our equivalence reduction technique: (1) partial evaluation (already discussed in Section~\ref{section:partialeval}) and (2) term rewriting. Given a partially evaluated program $\prog$, our synthesis algorithm checks whether it is possible to simplify $\prog$ using a set of rewrite rules that capture known equivalences between expressions in our DSL. Figure~\ref{fig:rewrite-rules}
 shows the rewrite rules for our DSL using the notation $l \rewrite r$, meaning that a term that matches $l$ can be rewritten into the form on the right. Observe that the free variables in $l$ and $r$ are universally quantified, so a term $t$ is said to match the left-hand-side $l$ if there exists a substitution $\sigma$ such that $t = l[\sigma]$. Furthermore, the result of applying this rewrite rule to $t$ is $r[\sigma]$.

With this notation in place, we now turn our attention to the {\sc Reducible} procedure called by the {\sc SynthesizeExtractor} algorithm. Recall that {\sc Reducible} returns a boolean ($\top$ or $\bot$) to indicate whether a term can be simplified using a set $\rws$ of domain-specific rewrite rules. This {\sc Reducible} procedure is defined using the two inference rules shown in Figure~\ref{fig:reducible}. According to the first rule ({\sc Base}), holes and constant values are not reducible. The second rule labeled {\sc Rec} deals with terms $E \equiv {\sf{f}}(E_1, \ldots, E_n)$ by recursively invoking the {\sc Reducible} procedure on each $E_i$. If any $E_i$ is reducible, it also returns reducible. Otherwise, it checks whether any rewrite rule $\rw \in \rws$ matches  $E$, meaning that the left-hand side of $\rw$ can be unified with $E$. If so, it returns true, and false otherwise.

 \begin{figure}
    \centering
    \small
    \begin{mathpar}
    \inferrule*[Left=Base]{{\sf Root}(\prog) = v \\ \prog \vdash v: (l, \nspec) \\ l \in \{\hole, \absimg\}}{\rws \vdash \prog \reduce \bot} \\ 
    \inferrule*[Left=Rec]{{\sf Root}(\prog) = v \\ \prog \vdash v: ({\sf f}, \nspec)\\\\
   \in {\sf{Children}}(\prog, v) = \{ v_1, \ldots, v_n\} \\ \rws \vdash {\sf Subtree}(\prog, v_i) \reduce b_i
    }{\rws \vdash \prog \reduce (\exists_i. \ b_i = \top \vee \exists \rw \in \rws. {\sf Is}(\prog, \rw)) \ ? \ \top : \bot}\\
    \end{mathpar}
    \vspace{-0.5in}
    \caption{Inference rules for {\sc Reducible}.   $\rws$ represents all rewrite rules, some of which are shown in Figure~\ref{fig:rewrite-rules}. }
    \label{fig:reducible}
    \vspace*{-0.5cm}
\end{figure}

\begin{example}

Consider a partial program of the form $\textsf{Union}(P_1, P_2, \hole)$ where $P_1, P_2$ have been partially evaluated as $\absimg_1$ and $\absimg_2$, respectively.  Suppose that the symbolic image $\absimg_1 $ is the set of objects $\{\imgobj_1, \imgobj_2, \imgobj_3\}$ and $\absimg_2$ is $\{\imgobj_2, \imgobj_3\}$. Since $\absimg_2 \subseteq \absimg_1$, this program will match with the rewrite rule 
\begin{align*}
    \textsf{Union}(A_1, \ldots, A_i, \ldots, A_n) \rewrite \textsf{Union}(A_1, \ldots, A_n) \text{ if $\exists j$ such that $A_i \subseteq A_j$.}
\end{align*}
which corresponds to the domination rule for sets.  Thus, this program simplifies to $\textsf{Union}(P_1, \hole)$, meaning that the \textsc{Reducible} procedure will return $\top$ and this partial program will be pruned. 
\end{example}

\section{Implementation}


We have implemented the proposed algorithm as a new tool called \toolname written in Python. In what follows, we describe key implementation details that are not covered in the technical sections.

\mypar{Computer vision primitives} Recall that our DSL operates over symbolic images, which are generated from the raw input image by applying existing computer vision primitives. In our implementation, we use the Amazon Rekognition library 
for object classification, text detection, and facial attribute classification. Compared with similar vision libraries, Rekognition offers more capabilities that are well-suited for image manipulation tasks of interest to this work.


\mypar{Graphical user interface} \toolname also incorporates a graphical user interface that allows users to demonstrate the desired image processing task. Our GUI is implemented in  JavaScript and supports both  image manipulation as well as image search. 
To use the GUI, the user first uploads their batch of images
 and then selects one or more images to annotate.  For each image being annotated, the GUI indicates  regions of the image that are classified as an object with rectangular bounding boxes. In the image editing mode, the user can select one of these objects and then apply the desired action (e.g., crop, blur, or highlight). When using \toolname in  search mode, the user can  indicate the image as either being of interest or irrelevant. 
 Once the user is done annotating a representative set of images, they press a button to invoke the synthesizer. If synthesis is successful,  \toolname  applies the generated  program to the entire image set and uploads the output to a new directory,
 which contains all the relevant images with the desired edits applied to them. 

\section{Evaluation}

In this section, we describe the results of our experimental evaluation, which is designed to answer the following research questions:
\begin{itemize}[leftmargin=*]
    \item \textbf{RQ1.} Can \toolname automate interesting image manipulation and exploration tasks?
    \item \textbf{RQ2.} How many examples does \toolname need to synthesize the intended program?
    \item \textbf{RQ3.} How does \toolname's synthesis algorithm compare against  existing baselines?
    \item \textbf{RQ4.} How important are the pruning techniques used by the synthesizer?
    \revise{
    \item \textbf{RQ5.} How effective are the synthesized programs in producing the desired edit on the test set?}

\end{itemize}

\mypar{Benchmarks} To answer these questions, we collected a set of 50 benchmark tasks across three domains, namely Wedding, Receipts, and Objects.  Tasks in the Wedding domain involve identifying and manipulating specific faces. An example task in this domain is to “crop out wedding guests who are not smiling.” Tasks in the Receipts domain involve identifying specific words or classes of text, such as “highlight the prices to the right of the words ‘total’ and ‘subtotal.’” Tasks in the Objects domain require manipulating specific classes of objects that are spatially related to other objects, such as, "crop the faces of people playing the guitar". Many of these tasks are motivated by real-world scenarios found on image editing forums, such as Reddit groups related to  Photoshop and GIMP. For each task, we manually wrote a ground truth program in our DSL that can be used to check the correctness of the program returned by \toolname.  

\begin{figure}
    \begin{table}[H]
\footnotesize
\caption{Statistics about  images and tasks for each domain. Program size is measured in terms of AST nodes.}
\vspace{-10pt}
\begin{tabular}{|c|c|c|c|c|}
    \hline
     {\bf Dataset}  & {\bf \#  Images} & {\bf Avg. \#  Objects} & {\bf \# Tasks} & {\bf Avg. Program Size} \\
     \hline
     Wedding & 121 & 10 & 16 & 9.4 \\
     \hline
     Receipts & 38 & 59 & 13 & 7.8  \\
     \hline
     Objects & 608 & 3 & 21 & 8.3 \\
     \hline
\end{tabular}
\label{tab:dataset}
\vspace*{-20pt}
\end{table}
\end{figure}

Table \ref{tab:dataset} gives some statistics about each of the three domains used in our evaluation. As we can see, each domain varies in terms of the number of images they contain and the average number of objects in a given image. Observe that the Receipts domain contains the largest number of objects per image because each word is identified as a unique text object.  In contrast, images in the Objects domain are much more sparse. For each domain, we have between 13 and 21 synthesis tasks, and the average size (in terms of AST nodes) of the ground truth program is in the 8-10 range.

\subsection{Experimental Setup}\label{setup}

To answer our first research question, we attempted to use \toolname to automate each of our 50 benchmark tasks using the following methodology: We first select an image from the task's domain and apply the desired edit. When choosing an image, we prefer those that contain as few objects as possible, as this choice involves the least amount of work for  the user.  Then, we use \toolname to synthesize a program based on  this single demonstration. If the generated program produces the desired edit on \emph{all} images in the data set, we consider the task to be successfully automated. Otherwise, we select a single image where  \toolname does \emph{not} produce the desired edit and re-attempt synthesis with this \emph{additional} example. We continue this process for up to 10 rounds and up to 180 seconds per round.  All of our experiments are conducted on a desktop machine with 2.3 GHz dual-core Intel core i5 CPU and 8 GB of physical memory.

\subsection{Main Results}\label{sec:main}

Table~\ref{tab:results} presents the results of this experiment. The key takeaway is that \toolname can successfully 
automate 48 of the 50 tasks in our benchmark suite within the given resource limits.  Table~\ref{tab:results}  also shows average and median synthesis times for the last round of user interaction. As we can see from this table, average  synthesis time is around 15 seconds, with the median being much faster at around 1 second. We also note that synthesis time varies significantly across the domains, with the fastest being Objects and slowest being Receipts. This discrepancy makes sense considering the average  number of objects per domain. In particular, recall that the number of constants in the DSL depends on the number of objects in the target domain, so synthesis generally takes longer in domains like Receipts that contain a lot of objects. However, the Receipts domain generally requires fewer rounds of user interaction, as object-dense images are richer in information. The last column of Table~\ref{tab:results} shows the average number of rounds of user interaction. As we can see, the average number of demonstrations required across all three domains is just below 4. 

\begin{figure}
\vspace{2mm}
    \begin{table}[H]
\footnotesize
\caption{Summary of results for \toolname. \revise{We include 95\% confidence intervals.}}
\vspace{-10pt}
\begin{tabular}{|c|c|c|c|c|}
    \hline
     {\bf Dataset}  & {\bf \# solved} & {\bf Avg. Synth Time (s)} & {\bf Med. Synth Time (s)} & {\bf Avg. \# Examples} \\
     \hline
     Wedding & 14/16 & 15.6 \revise{$\pm$ 13.4} & 5.5 & 5.4 \revise{$\pm$ 1.0} \\
     \hline
     Receipts & 13/13 & 25.4 \revise{$\pm$ 23.4} & 1.6 & 2.2 \revise{$\pm$ 0.65} \\
     \hline
     Objects & 21/21 & 3.2 \revise{$\pm$ 2.4} & 0.1 & 3.8 \revise{$\pm$ 0.5} \\
     \hline
     {\bf Total} &  {\bf 48/50} & {\bf 12.8 \revise{$\pm$ 8.0}} & {\bf 1.2} & {\bf 3.8 \revise{$\pm$ 0.5}} \\ 
     \hline 
\end{tabular}
\vspace{-20pt}
\label{tab:results}
\end{table}
\end{figure}

\mypar{Failure analysis} We now examine the two tasks that \toolname fails to successfully automate. One of these tasks is from the Wedding domain and requires cropping the image to feature just the bride and the people standing directly to her left and right. In this case, \toolname fails to find the correct program within the time limit of 180 seconds because the size of the ground truth program is fairly large and there are a large number of detected objects.
The second task that \toolname fails to automate is also in the Wedding domain and involves identifying images that contain the bride's face only when there are people standing directly to her left and right. For this benchmark, \toolname requires more than 10 rounds of user interaction to find the desired program. 
Since this task requires extracting the bride's face only in a specific circumstance, there are many simpler programs that produce the same output on nearly all photos in the dataset. 

\vspace{0.1in}
\noindent\fbox{%
    \parbox{.98\textwidth}{%
        \textbf{Result for RQ1:} \toolname automates 48 out of 50 interesting image manipulation and exploration tasks, with a median synthesis time of 1.1 seconds.
    }%
}

\vspace{0.1in}
\noindent\fbox{%
    \parbox{.98\textwidth}{%
        \textbf{Result for RQ2:} \toolname requires an average of 4  images to synthesize the intended program.
    }%
}

\subsection{Comparison with Other Synthesis Tools}

To answer our third research question, we compare the synthesis engine of \toolname with existing synthesis tools. \revise{However, since existing  tools do not support the image editing domain, we first reduce our learning  problem to PBE (as discussed in Sections~\ref{sec:statement} and \ref{sec:top-level}). Furthermore, since prior work does not consider DSLs that operate over images, we   cast our synthesis problem as an instance of  \emph{syntax-guided synthesis} (SyGuS) and instantiate the SyGuS framework with our domain-specific language. Among the solvers that support the SyGuS format, we compare \toolname's synthesis engine against the  two most recent winners of the SyGuS competition. } One of these solvers \cite{cvc5} extends the CVC SMT solver~\cite{cvc4} to support syntax-guided synthesis. The second one, EUSolver~\cite{eusolver}, is based on bottom-up enumerative search with equivalence reduction and uses a divide-and-conquer approach to decompose the synthesis task into smaller problems. 

Among these solvers, we found the CVC solver to be ineffective at solving the synthesis problems that arise in our setting. In particular, instantiating our DSL in the CVC framework requires using the theory of sets (to represent symbolic images), but the resulting synthesis problems in this background theory are not easily solvable using a purely theorem proving approach. In fact, we found that this SMT-based approach is unable to solve even the simplest of our synthesis tasks within the given time limit.

In contrast, we were able to successfully instantiate EUSolver to handle the synthesis tasks from our image editing domain. The results of the comparison against EUSolver are presented in Figure~\ref{fig:comparison} as a bar graph. Here, the $x$-axis indicates the difficulty level of the synthesis tasks (as measured by AST size); thus,  bars in this plot correspond to synthesis tasks of increasing difficulty. On the other hand, the $y$-axis shows the number of tasks completed within the given time limit. The solid blue bars correspond to the results for \toolname, and the hatched orange bars correspond to those of EUSolver. As we can see from this figure, EUSolver can solve 14 out of 16 of the easiest tasks, but, as the difficulty level increases, there is a growing gap between \toolname and EUSolver. Overall, \toolname can solve 14 more tasks than EUSolver out of the 50 tasks total. 

\revise{To gain some intuition about these results, we briefly discuss why ImageEye outperforms EUSolver on our benchmarks. First, unlike EUSolver which is a generic solver, \toolname performs a form of abstract interpretation customized to images and our image editing DSL. This type of reasoning allows \toolname to prune many infeasible programs that need to be enumerated by EUSolver.   Second, many of the techniques in EUSolver target branching, but our DSL allows branching in a stylized manner (at the top level and as part of filtering constructs). Finally, EUSolver works by combining sub-programs that work on a subset of examples, and this particular decomposition strategy does not seem effective in the image domain. For these reasons, \toolname is more effective at solving the PBE problems that arise in the context of image extractor synthesis.}

\vspace{0.1in}
\noindent\fbox{%
    \parbox{.98\textwidth}{%
        \textbf{Result for RQ3:} The baseline synthesis tool, EUSolver, can successfully solve  68\% of the benchmarks compared with 96\% solved by \toolname.
    }%
}

\begin{figure}
\vspace{2mm}
\begin{minipage}{.45\textwidth}
    \centering
    \centering
    \includegraphics[width=0.95\textwidth]{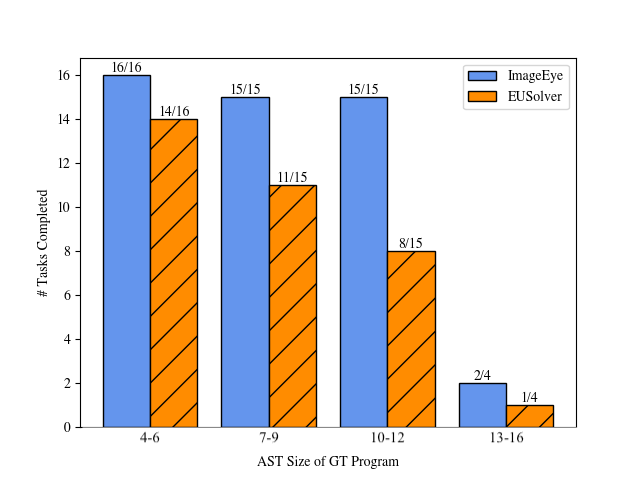}
    \vspace*{-0.4cm}
    \caption{Comparison of \toolname and EUSolver.}
    \label{fig:comparison}
\end{minipage}%
\begin{minipage}{.5\textwidth}
    \centering
    \centering
    \includegraphics[width=1.0\textwidth]{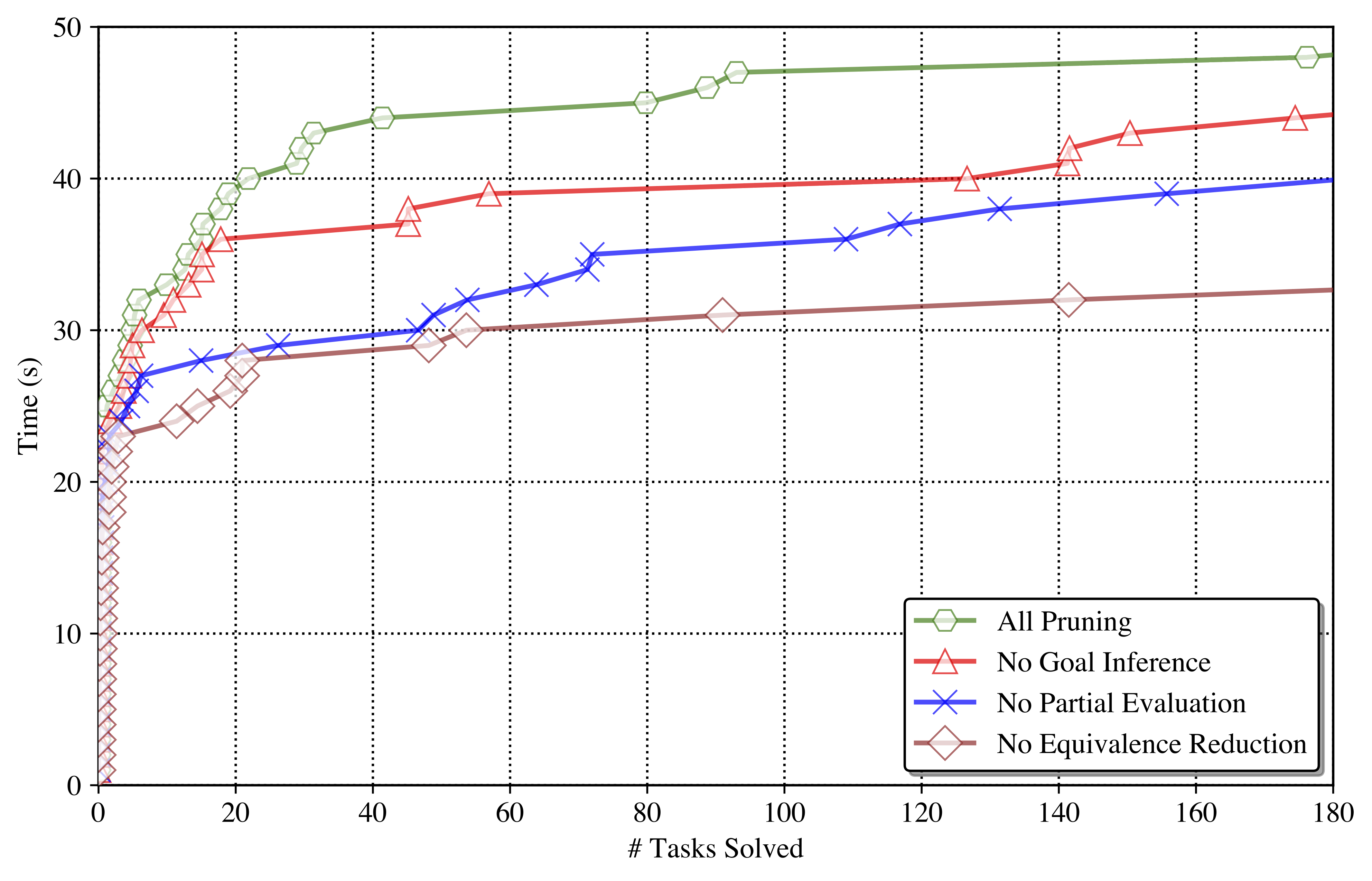}
    \vspace*{-0.7cm}
    \caption{Ablation study for \toolname}
    \label{fig:ablation}
\end{minipage}
\vspace*{-0.5cm}
\end{figure}

\subsection{Ablation Study}

To answer our final  research question, we present the results of an ablation study in which we disable some of the key components of our synthesis algorithm. In particular, we consider the following three ablations of \toolname:

\begin{itemize}[leftmargin=*]
    \item {\bf No Goal Inference:} This ablation does not use the goal inference technique of Section~\ref{sec:goalinference}. However, it does perform equivalence reduction with partial evaluation and term rewriting. 
    \item {\bf No Partial Evaluation:} This version of \toolname does not perform partial evaluation before applying the term rewrite rules from Section~\ref{section:equivreduction}. However, it does perform goal inference and uses rewrite rules to prune the search space. 
    
    \item {\bf No Equivalence Reduction:} This ablation does not perform equivalence reduction using term rewriting. In other words, it does not utilize the techniques described in Section~\ref{section:equivreduction}.
\end{itemize}

The results of this ablation study are presented as a cactus plot in Figure \ref{fig:ablation}. Here, the $x$-axis shows cumulative synthesis time and the $y$-axis shows the number of benchmarks solved within a given time. As we can see from this figure, all of our proposed techniques have a significant impact on synthesis time. Without goal inference, \toolname times out on four additional tasks and takes around 14 seconds longer on average to solve the tasks on which it does not time out. 
Without partial evaluation, \toolname times out on eight additional tasks and takes around 23 seconds longer on average. 
 Finally, without equivalence reduction, \toolname times out on 16 additional tasks.
 


\vspace{0.1in}
\noindent\fbox{%
    \parbox{.98\textwidth}{%
        \textbf{Result for RQ4:} The  techniques discussed in Sections~\ref{sec:goalinference}-\ref{section:equivreduction} are important for making synthesis effective in the image editing domain.
    }%
}

\subsection{Reliability of Underlying Neural Models}

\revise{When reporting our main experimental results in Section~\ref{sec:main}, we manually inspect the synthesized program and consider the synthesis result to be correct if it is semantically equivalent to the ground truth program we wrote by hand. However, because the synthesized programs contain neural networks for object recognition and classification, even a correct synthesized program may not produce the expected output for all images in the test set.  For instance, if the desired edit is to blur all cats in an image, and the object classification model does not recognize a specific cat in an image, then the program $\textsf{Is}(\textsf{Object}(\texttt{cat}))$ will not produce the desired output.}

\revise{In this section, we additionally evaluate the accuracy of the synthesized programs  in terms of  the percentage of images in the test set for which the desired output is produced. However, since there are a very large number of images in some of the data sets, we randomly sample 20 images from each of the three data sets\footnote{To ensure relevance of the sampled images, we re-sample if the output of the synthesized program is empty on that image.} and manually examine if the synthesized program produces the intended output for each of these 20 images. }


\revise{Overall, across the three domains, we find that the synthesized programs produce the intended output on \textbf{87\%} of the sampled images. Many of the failure cases stem from the same misclassification occurring numerous times. For instance, for the wedding data set, the face recognition model fails to identify that a specific wedding guest is smiling across many images.}

\vspace{0.1in}
\noindent\fbox{%
    \parbox{.98\textwidth}{%
      \revise{  \textbf{Result for RQ5:} The programs synthesized by \toolname produce the desired edit for 87\% of the images in the test set.}
    }%
}

\section{Limitations}

\revise{In this section, we discuss some of the main limitations of \toolname. First, the effectiveness of \toolname is highly dependent on the underlying neural components. For example, if the target task involves a class of objects that the model cannot reliably identify,  \toolname will not be effective in producing the intended output image. However, we try to mitigate this problem through the choice of the neural primitives included in the DSL and intentionally exclude object classifiers that do not work reliably in practice. Additionally, we note that manually refining a small portion of the images in the data set is preferable over editing all images manually.  }

\revise{A second limitation of \toolname is due to its user interaction model. In particular, to decide whether the synthesized program is correct, the user needs to inspect all images in the dataset, and, even then, it may be hard to distinguish whether any problems in the output are due to the lack of sufficient demonstrations or due to limitations of the  neural primitives in the synthesized program. Additionally, if the underlying neural primitives misclassify relevant objects in the \emph{training} example, then the user will not be able to perform their demonstration, as the \toolname GUI only allows editing objects recognized by the object recognition engine. This design choice  is intentional in that the interface \emph{forces} the user to perform demonstration on ``good" images. However, a potential disadvantage is that the user may need to go through multiple images before they find one  on which the demonstration can be performed. 
One way to address both of these limitations could be through  active learning approaches that suggest images for the user to label. }

\revise{Finally, \toolname is limited by the expressivity of the DSL, which  only provides built-in functions like $\textsf{GetAbove}$ and $\textsf{GetRight}$ that can be evaluated easily by using bounding boxes. These functions are not suitable for reasoning about three-dimensional spatial concepts like one object being behind another. This limitation could be addressed by extending the DSL with functions that are implemented using additional neural primitives.}

\section{Related work}

\mypar{Image manipulation} Image manipulation is a long-standing problem in computer vision, graphics, and computational photography. Recent efforts in this space have used deep neural networks to generate realistic variants of a given image, applying them to tasks  like inpainting~\cite{gatedcov, foreground, edgeconnect}, extrapolation~\cite{extrapolation1, extrapolation2}, and photo editing~\cite{stargan, modulargan, fader, brock2017neural}. As a representative example of a work in this space,   Fader~\cite{fader} can generate variants of a subject with different attributes  like age or gender. In this work, we solve a different type of image manipulation problem than most of these prior efforts: our focus is on identifying what operations to apply to \emph{which parts} of the image, rather than generating realistic variants of a given input image. Furthermore, our approach is based on neurosymbolic program synthesis rather than generative neural networks. However, these approaches can be incorporated into our overall approach by treating them as pre-trained neural network primitives in our DSL. 


\mypar{Neurosymbolic programming for images} Recently, there has been growing interest in using neurosymbolic DSLs that include both logical and neural components in the image domain~\cite{tian2018learning, pmlr-v97-young19a, mao2018the, pmlr-v119-huang20h, ellis18, Johnson_2017_ICCV, ReedF15}. Similar to our work, these efforts typically combine symbolic operators for higher-level reasoning with neural modules for perception. The closest work in this space is that of Huang et al. ~\cite{pmlr-v119-huang20h}, which generates programmatic \emph{referring expressions} that  identify specific objects  in terms of their attributes and relationships with respect to other objects in the image. However, their work differs from ours in several respects: (1) They focus on locating a single object whereas we focus on applying  actions to a \emph{set} of objects; (2) they synthesize logic programs using a different synthesis algorithm based on deep Q-learning and hierarchical search; and (3) their focus is on a synthetic dataset with geometric shapes whereas our focus is on more realistic images with faces, text, and arbitrary objects.

\mypar{Top-down enumerative search} Several recent synthesis techniques use a combination of top-down enumerative search and lightweight deductive reasoning to significantly reduce the search space~\cite{escher, neo, lambda2, flashmeta, myth, scythe, smyth}. Among these, our approach bears similarities to enumerative synthesis approaches that propagate the goal to the missing subexpressions. In particular, {\sc Myth}~\cite{myth}, {\sc SMyth}~\cite{smyth} and $\lambda^2$~\cite{lambda2} infer new input-output examples for the holes in a partial programs by utilizing type information embedded in the language. While our method also performs goal-directed reasoning, the underlying deductive reasoning techniques are different. 
Another synthesis framework that uses example-based specifications is {\sc FlashMeta}~\cite{flashmeta}, which propagates specifications from the DSL operators down into their arguments using so-called \emph{witness functions}. Unlike the synthesis algorithm we present here, {\sc FlashMeta} uses version space algebras (VSA) to represent the space of all programs that are consistent with the provided input-output examples.



Prior efforts on regular expression synthesis~\cite{regel, alpharegex, opsynth}  also utilize over- and under-approximations  to eliminate infeasible programs. In particular, {\sc Regel}~\cite{regel} and {\sc AlphaRegex}~\cite{alpharegex} both   derive  over- and under-approximations of the set of strings that could be matched by a partial regex. In contrast, we use over- and under-approximations in a  different context and  approximate the synthesis sub-goals as opposed to the outputs of a given partial program. 


\mypar{Synthesis using term rewriting} There has been several efforts that use term rewriting~\cite{rewrite} to speed up program synthesis~\cite{Dershowitz1993467, Reddy1989,aws_equiv_reduc, mitra}. These techniques have found applications in many  domains, including CAD model construction~\cite{szalinski}, robotic process automation~\cite{webrobot}, compiler construction~\cite{compiler_rewrite}, and writing numerical software~\cite{Boyle1997}.  Similar to the work of Smith et al.~\cite{aws_equiv_reduc}, we also use  an equational rewrite system to reduce the number of partial programs enumerated during top-down synthesis; however, our technique combines this idea with partial evaluation~\cite{partial_eval}  and goal-directed reasoning to make it more effective.

\mypar{Synthesis using partial evaluation} There are a variety of domain-specific~\cite{sketch, rosette} and domain-agnostic~\cite{morpheus, idips}  synthesis techniques that use partial evaluation~\cite{partial_eval} to obtain a more efficient synthesis procedure. In particular, {\sc Morpheus}~\cite{morpheus} utilizes partial evaluation to infer a  more precise specification of the partial program, which helps to  increase its SMT-based pruning power. Similar to our approach, both {\sc Rosette}~\cite{rosette} and IDIPS~\cite{idips} evaluate the concrete part of the partial program to obtain a simplified version. However, our work differs from these prior techniques in that we use partial evaluation to make term rewriting more effective. 


\mypar{Programming by demonstration} Programming-by-demonstration  techniques~\cite{pbd} utilize user demonstrations to learn a new task. This paradigm has been successfully adopted in a variety of  scenarios, including web automation~\cite{helena, webrobot, vegemite, ringer}, robot learning~\cite{robot_learn1, robot_learn2, robot_learn3}, text editing~\cite{smartedit}, and SQL query synthesis~\cite{sickle}. \toolname also allows users to demonstrate the desired task through a graphical user interface and leverages the demonstration to decompose the synthesis task into a set of PBE problems, one for each action in the demonstration.


\section{Conclusion}
We have presented a new synthesis-based approach for automating image editing and  search tasks. Given a few user demonstrations performed through a graphical user interface, our method synthesizes a program that can be used to automate the desired image search or batch editing task. At the heart of our approach lies a neuro-symbolic DSL that combines functional operators with pre-trained neural modules for object detection and classification.
 We have implemented this approach in a new tool called \toolname and evaluated it on 50 image search and editing tasks across three different domains involving human faces, text, and arbitrary objects. Our evaluation shows that \toolname can automate 96\% of these tasks, with a median synthesis time of 1 second and requiring on average four user demonstrations. 

\begin{acks}                            
  We would like to thank Michelle Ding, fellow graduate students on GDC 5S, and the anonymous reviewers for their help and feedback on this paper. This material is based upon work supported by the \grantsponsor{GS100000001}{National Science Foundation}{http://dx.doi.org/10.13039/100000001} under grant numbers \grantnum{GS100000001}{CCF-1811865} and \grantnum{GS100000001}{CCF-1918889}, Google under the Google Faculty Research Grant, as well as Facebook, Amazon, Intel, and RelationalAI.
\end{acks}

\section*{Data Availability}
An artifact supporting the results of this paper is available on Zenodo \cite{imageeye-artifact}. It includes the ImageEye implementation, benchmarks, and benchmarking scripts.

\bibliography{main}

\pagebreak
\appendix
\section{Proofs}

\setcounter{theorem}{0}

\begin{lemma} \label{lemma:consistency}
Let $P$ be a partial program derived by \textsc{SynthesizeExtractor}. If $P$ is not consistent with $\absimg_{in}$, then for any completion $P'$ of $P$, $\bigsemantics{P'}(\absimg_{in}) \not\sim \phi$.
\end{lemma}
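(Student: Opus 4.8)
The plan is to derive the lemma from a \emph{local soundness} property of the goal-inference operators $\infgoal{{\sf f}}$ and then propagate that property along the root-to-witness path of the partial program's tree. First I would record an auxiliary invariant: for every extractor $E$ (with holes filled arbitrarily) and every symbolic image $\absimg$, we have $\semantics{E}(\absimg) \subseteq \absimg$. This is a routine structural induction using Figures~\ref{fig:semantics} and~\ref{fig:func}: the bases \textsf{All} and $\mathsf{Is}(\attr)$ are immediate; \textsf{Union}, \textsf{Intersect}, and \textsf{Complement} preserve the property; and in \textsf{Find}/\textsf{Filter} every returned object is taken from the list produced by $f$ or by \textsf{GetContents}, each of which only ever returns objects of $\absimg$.

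The core of the argument is the following claim, proved by a case split on the DSL operator $\sf f$: if a node labeled $\sf f$ has goal $\nspec$, $\infgoal{{\sf f}}(\nspec) = \nspec_f$, and $\semantics{{\sf f}(E_1,\dots,E_n)}(\absimg_{in}) \sim \nspec$, then $\semantics{E_i}(\absimg_{in}) \sim \nspec_f$ for every extractor argument $E_i$. Writing $\nspec = (\absimg^-,\absimg^+)$: for \textsf{Union}, $\semantics{E_i}(\absimg_{in}) \subseteq \bigcup_j \semantics{E_j}(\absimg_{in}) \subseteq \absimg^+$ and $\emptyset \subseteq \semantics{E_i}(\absimg_{in})$ trivially; for \textsf{Intersect}, dually $\absimg^- \subseteq \bigcap_j\semantics{E_j}(\absimg_{in}) \subseteq \semantics{E_i}(\absimg_{in})$ and $\semantics{E_i}(\absimg_{in}) \subseteq \absimg_{in}$ by the invariant; for \textsf{Complement}$(E')$, unfold $\semantics{\textsf{Complement}(E')}(\absimg_{in}) = \absimg_{in}\setminus\semantics{E'}(\absimg_{in})$ and check that $\absimg^- \subseteq \absimg_{in}\setminus\semantics{E'}(\absimg_{in})$ together with $\semantics{E'}(\absimg_{in}) \subseteq \absimg_{in}$ gives $\semantics{E'}(\absimg_{in}) \subseteq \absimg_{in}\setminus\absimg^-$, while $\absimg_{in}\setminus\semantics{E'}(\absimg_{in}) \subseteq \absimg^+$ gives $\absimg_{in}\setminus\absimg^+ \subseteq \semantics{E'}(\absimg_{in})$; for \textsf{Find} and \textsf{Filter} the inferred goal is the trivial $(\emptyset,\absimg_{in})$ and soundness is immediate from the invariant.

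With these in place I would prove the contrapositive. Fix a completion $P'$ of $P$ with $\semantics{P'}(\absimg_{in}) \sim \phi$, where $\phi = \Pi(v_0)$ is the goal of the root $v_0$; I claim $P$ is consistent with $\absimg_{in}$. Let $P_v$ be an arbitrary complete subtree of $P$ and let $v_0 = u_0, u_1, \dots, u_k = v$ be the path from the root to $v$ in $P$. Each $u_i$ with $i < k$ has a child, hence (by the shape of \textsc{Expand}) is labeled with a DSL operator, and $P'$ keeps these labels, so $\semantics{P'_{u_i}}(\absimg_{in}) = \semantics{{\sf f}(\ldots, P'_{u_{i+1}}, \ldots)}(\absimg_{in})$; moreover $\Pi$ was assigned by \textsc{Expand}, so $\Pi(u_{i+1}) = \infgoal{{\sf f}}(\Pi(u_i))$. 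Then a straightforward induction on $i$ using the core claim, with base case $\semantics{P'_{u_0}}(\absimg_{in}) = \semantics{P'}(\absimg_{in}) \sim \phi = \Pi(u_0)$, yields $\semantics{P'_{u_i}}(\absimg_{in}) \sim \Pi(u_i)$ for all $i$. Since $P_v$ is complete, $P'_v = P_v$, so $\semantics{P_v}(\absimg_{in}) \sim \Pi(v)$. As $v$ was arbitrary, $P$ is consistent, which is the contrapositive of the lemma.

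The step I expect to be the main obstacle is the \textsf{Complement} (and, to a lesser degree, \textsf{Intersect}) case of the core claim, since this is where the over- and under-approximations swap roles and where the invariant $\semantics{E}(\absimg)\subseteq\absimg$ is genuinely needed to turn a statement about set complements within $\absimg_{in}$ into the precise bounds $\absimg_{in}\setminus\absimg^+ \subseteq \semantics{E'}(\absimg_{in}) \subseteq \absimg_{in}\setminus\absimg^-$. Everything else is bookkeeping about the tree structure of partial programs produced by \textsc{SynthesizeExtractor} and the fact that all subtrees are evaluated on the same input $\absimg_{in}$.
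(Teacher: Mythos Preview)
Your proposal is correct and covers the same ground as the paper's proof, but the organization is genuinely different. The paper proceeds by a direct structural induction on $P$: assuming $P$ is inconsistent, it locates the offending complete subtree $P_v$, pushes the inconsistency into the child $P_i$ containing $P_v$, invokes the inductive hypothesis to conclude $\semantics{P_i'}(\absimg_{in}) \not\sim \phi_i$ for every completion, and then argues operator-by-operator that this forces $\semantics{P'}(\absimg_{in}) \not\sim \phi$. Your argument instead takes the contrapositive, isolates the per-operator reasoning into a standalone \emph{local soundness} lemma for $\infgoal{\sf f}$ (if the parent satisfies its goal then each extractor child satisfies its inferred goal), and then replaces the structural induction by a simple induction along the root-to-$v$ path. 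The per-operator case analysis is identical in content, just contraposed.

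Two things your decomposition buys: first, the local-soundness lemma is a reusable, self-contained statement about the correctness of goal inference that makes the role of $\infgoal{\sf f}$ explicit; second, you surface the range invariant $\semantics{E}(\absimg)\subseteq\absimg$ as a separate fact, whereas the paper uses it tacitly (e.g., in the \textsf{Intersect} and \textsf{Find} cases it dismisses $\semantics{P_i'}(\absimg_{in})\not\subseteq\absimg_{in}$ as ``not possible'' without justification). The paper's organization is slightly more concrete in that each case directly exhibits the witnessing object showing $\not\sim$, which some readers may find easier to follow; your version trades that concreteness for modularity.
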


\begin{proof}
By structural induction on $P$.

\textbf{Base Case 1:} $P = \hole$. There are no complete subtrees of $P$, so it is vacuously true that if $P$ is not consistent with $\absimg_{in}$, then for any completion $P'$ of $P$, $\bigsemantics{P'}(\absimg_{in}) \not\sim \phi$. 

\textbf{Base Case 2:} $P = \textsf{All}$ or $P = \textsf{Is}(\varphi)$. Suppose $P$'s root node has goal annotation $\phi = (\absimg^-, \absimg^+)$. Suppose also that $P$ is not consistent with $\absimg_{in}$. In either case, the only complete subtree of $P$ is $P$, so it must be that $\bigsemantics{P}(\absimg_{in}) \not\sim \phi$. Also, since $P$ has no holes, the only completion of $P$ is $P$. Then it must be the case that for any completion $P'$ of $P$, $\bigsemantics{P'}(\absimg_{in}) \not\sim \phi$.

\textbf{Inductive Hypothesis:} Assume this lemma holds for partial programs $P_1, \ldots, P_n$. We will show that for any program $P$ constructed from existing partial programs, the lemma holds for $P$.

\textbf{Inductive Case:} We will show that for any program $P$ constructed from existing partial programs, the lemma holds for $P$. Let $P$ have the goal annotation $\phi = (\absimg^-, \absimg^+)$, and suppose that $P$ is not consistent with $\absimg_{in}$. Then there exists a complete sub-program $P_v$ of $P$ such that $\bigsemantics{P_v}(\absimg_{in}) \not\sim \Pi(v)$. We will proceed by considering all the possibilities for the top-level construct of $P$. 

\begin{itemize}
    \item $P = \textsf{Find}(P_1, \varphi)$. Since $P$ was derived by \textsc{SynthesizeExtractor}, the goal annotation of $P_1$ is $\phi_1 = (\emptyset, \absimg_{in})$. Note that either $P_v = P$, or $P_v$ is a complete sub-program of $P_1$. Suppose the latter case. Then $P_1$ is not consistent with $\absimg_{in}$. Let $\textsf{Find}(P_1', \varphi)$ be a completion $P$, where $P_1'$ is a completion of $P_1$. By inductive hypothesis, $\bigsemantics{P_1'}(\absimg_{in}) \not\sim \phi_1$. Then either $\emptyset \not\subseteq \absimg_{in}$ or $\absimg_{in} \not\subseteq \absimg_{in}$. However, neither of these cases are possible. Thus, it must be that $P_v = P$. Then $P$ must be a complete program. The only completion of $P$ is $P$, so it must be the case that for any completion $P'$ of $P$, $\bigsemantics{P'}(\absimg_{in}) \not\sim \phi$.

\item $P = \textsf{Filter}(P_1, \varphi, f)$. This proof is identical to that of \textsf{Find}.

\item $P = \textsf{Complement}(P_1)$.
For this and all other top-level constructs, we will only consider the case that $P_v \neq P$.  Since $P_1$ was derived by the synthesis procedure, its goal annotation is $\phi_1 = (\absimg_{in} \setminus \absimg^+, \absimg_{in} \setminus \absimg^-)$. Note that $P_v$ is a complete sub-program of $P_1$, so $P_1$ is not consistent with $\absimg_{in}$. Let $\textsf{Complement}(P_1')$ be a completion of $P$, where $P_1'$ is a completion of $P_1$. By inductive hypothesis, $\bigsemantics{P_1'}(\absimg_{in}) \not\sim \phi_1$. By the semantics of \textsf{Complement},
\begin{equation}
\bigsemantics{\textsf{Complement}(P_1')}(\absimg_{in}) = \absimg_{in} \setminus \bigsemantics{P_1'}.
\end{equation}
Since $\bigsemantics{P_1'}(\absimg_{in}) \not\sim \phi_1$, it must be that either 
\begin{equation}
    \begin{split}
\absimg_{in} \setminus \absimg^+ &\not\subseteq \bigsemantics{P_1'}(\absimg_{in}) , \\
\text{or  } \bigsemantics{P_1'}(\absimg_{in}) &\not\subseteq \absimg_{in} \setminus \absimg^-.
    \end{split}
\end{equation}
 Without loss of generality, suppose the former, and notice that 
 \begin{equation}
 \absimg_{in} \setminus \bigsemantics{P_1'}(\absimg_{in}) = \bigsemantics{\textsf{Complement}(P_1')}(\absimg_{in}) \not\subseteq \absimg^+. 
 \end{equation}
 Then $\bigsemantics{\textsf{Complement}(P_1')}(\absimg_{in}) \not\sim \phi$. Therefore, for any completion $P'$ of $P$, $\bigsemantics{P'}(\absimg_{in}) \not\sim \phi$.

\item $P = \textsf{Union}(P_1, \ldots, P_n)$. Note that there is some $P_i$ such that $P_v$ is a complete sub-program of $P_i$. Then $P_i$ is not consistent with $\absimg_{in}$. Since $P$ was derived by \textsc{SynthesizeExtractor}, $P_i$ has goal annotation $\phi_i = (\emptyset, \absimg^+)$. Let $\textsf{Union}(P_1',\ldots,P_i',\ldots,P_n')$ be a completion of $P$, where for all $j$, $P_j'$ is a completion of $P_j$. By inductive hypothesis, $\bigsemantics{P_i'}(\absimg_{in}) \not\sim \phi_i$. By the semantics of \textsf{Union},
\begin{equation}
    \bigsemantics{\textsf{Union}(P_1',\ldots,P_i',\ldots,P_n')} = \bigcup_{k=1}^n \bigsemantics{P_k}(\absimg_{in}).
\end{equation}
 Since $\bigsemantics{P_i'}(\absimg_{in}) \not\sim \phi_i$, it must be that either $\emptyset \not\subseteq \bigsemantics{P_i'}(\absimg_{in})$ or $ \bigsemantics{P_i'}(\absimg_{in})\not\subseteq \absimg^+$. The former is not possible, so the latter must be true. Then there is some object $\imgobj \in \bigsemantics{P_i'}(\absimg_{in})$ such that $\imgobj \not\in \absimg^+$. Then it is also the case that
 \begin{equation}
 \imgobj \in \bigcup_{k=1}^n \bigsemantics{P_k}(\absimg_{in}) = \bigsemantics{\textsf{Union}(P_1',\ldots,P_i',\ldots,P_n')}(\absimg_{in},
 \end{equation} so
 \begin{equation}
 \bigsemantics{\textsf{Union}(P_1',\ldots,P_i',\ldots,P_n')} \not\subseteq \absimg^+.     
 \end{equation}
 Thus, $\textsf{Union}(P_1',\ldots,P_i',\ldots,P_n')$ is not consistent with $\absimg_{in}$. Therefore, for any completion $P'$ of $P$, $\bigsemantics{P'}(\absimg_{in}) \not\sim \phi$.

\item $P = \textsf{Intersect}(P_1, \ldots, P_n)$. Note that there is some $P_i$ such that $P_v$ is a complete sub-program of $P_i$. Then $P_i$ is not consistent with $\absimg_{in}$. Since $P_i$ was derived by the synthesis procedure, $P_i$ has goal annotation $\phi_i = (\absimg^-, \absimg_{in})$. 

Let $\textsf{Intersect}(P_1',\ldots,P_i',\ldots,P_n')$ be a completion of $P$, where for all $j$, $P_j'$ is a completion of $P_j$. By inductive hypothesis, $\bigsemantics{P_i'}(\absimg_{in}) \not\sim \phi_i$. By the semantics of \textsf{Intersect},
\begin{equation}
  \bigsemantics{\textsf{Intersect}(P_1',\ldots,P_i',\ldots,P_n')} = \bigcap_{k=1}^n \bigsemantics{P_k}(\absimg_{in}).  
\end{equation}
 Since $\bigsemantics{P_i'}(\absimg_{in}) \not\sim \phi_i$, it must be that either $\absimg^- \not\subseteq \bigsemantics{P_i'}(\absimg_{in})$ or $ \bigsemantics{P_i'}(\absimg_{in})\not\subseteq \absimg_{in}$. The latter is not possible, so the former must be true. Then there is some object $\imgobj \in \absimg^-$ such that $\imgobj \not\in \bigsemantics{P_i'}(\absimg_{in})$. Then it is also the case that 
 \begin{equation}
  \imgobj \not\in \bigcap_{k=1}^n \bigsemantics{P_k}(\absimg_{in}) = \bigsemantics{\textsf{Intersect}(P_1',\ldots,P_i',\ldots,P_n')}(\absimg_{in}),   
 \end{equation}
 so 
 \begin{equation}
 \absimg^- \not\subseteq \bigsemantics{\textsf{Intersect}(P_1',\ldots,P_i',\ldots,P_n')}. 
 \end{equation}
 Thus, $\textsf{Intersect}(P_1',\ldots,P_i',\ldots,P_n')$ is not consistent with $\absimg_{in}$. Therefore, for any completion $P'$ of $P$, $\bigsemantics{P'}(\absimg_{in}) \not\sim \phi$.
\end{itemize}
\end{proof}

\begin{theorem} \label{thm:consistency}
Let $P$ be a partial program derived by \textsc{SynthesizeExtractor} whose root node has goal annotation $(\absimg_{out}, \absimg_{out})$. If $P$ is not consistent with $\absimg_{in}$, then for any completion $P'$ of $P$, $\bigsemantics{P'}(\absimg_{in}) \not\equiv \absimg_{out}$.
\end{theorem}

\begin{proof}
Let $P$ be a partial program derived by the synthesis procedure whose root node has goal annotation $ (\absimg_{out}, \absimg_{out})$, and suppose $P$ is not consistent with $\absimg_{in}$. By Lemma \ref{lemma:consistency}, for any completion $P'$ of $P$, $\bigsemantics{P'}(\absimg_{in}) \not\sim (\absimg_{out}, \absimg_{out})$. Therefore, $\bigsemantics{P'}(\absimg_{in}) \not\equiv \absimg_{out}$. 
\end{proof}

\section{List of Benchmarks}


\begin{tabularx}{\textwidth}{|>{\hsize=.1\hsize\linewidth=\hsize}X|>{\hsize=1.6\hsize\linewidth=\hsize}X|>{\hsize=.3\hsize\linewidth=\hsize}X|
>{\hsize=.9\hsize\linewidth=\hsize}X|>{\hsize=.1\hsize\linewidth=\hsize}X|}
\hline
\textbf{id} & \textbf{Ground Truth Program} & \textbf{Dataset} & \textbf{Description} & \textbf{Size} \\
 \hline
 1 & \{\textsf{Intersect}(\textsf{Is}(\textsf{Smiling}), \textsf{Is}(\textsf{EyesOpen}))\newline $\rightarrow$\textsf{Brighten}\} & Wedding & Brighten all faces that are smiling and have eyes open. & 5 \\ 
 \hline
 2 & \{\textsf{Find}(\textsf{Is}(\textsf{FaceObject}), \newline \textsf{FaceObject}, \textsf{GetAbove}) \newline$\rightarrow$ \textsf{Brighten}\} & Wedding & Brighten all faces in back. & 5 \\ 
 \hline
 3 & \{\textsf{Union}(\textsf{Is}(\textsf{Face}(8)), \textsf{Is}(\textsf{Face}(34))) $\rightarrow$ \textsf{Crop}\} & Wedding & Crop image to feature just faces of bride and groom. & 7 \\ 
 \hline
 4 & \{\textsf{Intersect}(\textsf{Is}(\textsf{FaceObject}), \newline \textsf{Complement}(\textsf{Is}(\textsf{Face}(8)))) $\rightarrow$ \textsf{Blur}\} & Wedding & Blur all faces except the bride's face. & 7 \\
 \hline
  5 & \{\textsf{Find}(\textsf{Find}(\textsf{Is}(\textsf{FaceObject}), \newline \textsf{FaceObject}, \textsf{GetRight}), \textsf{FaceObject}, \newline \textsf{GetRight}) $\rightarrow$ \textsf{Brighten}\} & Wedding & Brighten all faces except the leftmost two faces. & 8 \\ 
 \hline
 6 & \{\textsf{Intersect}(\textsf{Is}(\textsf{Face}), \newline \textsf{Complement}(\textsf{Intersect}(\textsf{Is}(\textsf{Smiling}), \textsf{Is}(\textsf{EyesOpen}))) $\rightarrow$ \textsf{Blur} \} & Wedding & Blur all faces that are not smiling and do not have their eyes open. & 9 \\
 \hline
 7 & \{\textsf{Intersect}(\textsf{Is}(\textsf{Smiling}), \textsf{Is}(\textsf{EyesOpen}),
 \newline
 \textsf{Complement}(\textsf{Is}(\textsf{Face}(34))))$\rightarrow$ \textsf{Blur}\} & Wedding & Crop image to feature all faces that are smiling and have eyes open, except the groom's face. & 9 \\
 \hline
 8 & \{\textsf{Union}(\textsf{Is}(\textsf{Face}(8)), \newline  \textsf{Intersect}(\textsf{Is}(\textsf{Smiling}), \textsf{Is}(\textsf{EyesOpen}))) $\rightarrow$ \textsf{Blur}\}& Wedding & Crop image to feature the bride's face, plus faces that are smiling and have their eyes open. & 9 \\
 \hline 
 9 & \{\textsf{Intersect}(\textsf{Complement}(\textsf{Is}(\textsf{Smiling})), \textsf{Find}(\textsf{Is}(\textsf{FaceObject}), \newline
 \textsf{FaceObject}, \textsf{GetAbove})) $\rightarrow$ \textsf{Blur}\} & Wedding & Blur all faces in the back that are not smiling. & 9 \\ 
 \hline
 10 & \{\textsf{Union}(\textsf{Intersect}(\textsf{Is}(\textsf{FaceObject}), \textsf{Complement}(\textsf{Is}(\textsf{Smiling}))), \textsf{Is}(\textsf{BelowAge}(18))) $\rightarrow$ \textsf{Blur}\} & Wedding & Blur all faces that are not smiling or are under 18. & 10 \\ 
 \hline 
 11 & \{\textsf{Union}(\textsf{Find}(\textsf{Is}(\textsf{Face}(8)), \textsf{FaceObject}, \textsf{GetRight}), \textsf{Is}(\textsf{Face}(8))) $\rightarrow$ \textsf{Crop}\} & Wedding & Crop image to feature just the bride's face and the face directly to her right. & 10 \\ 
 \hline
 12 & \{\textsf{Union}(\textsf{Is}(\textsf{Face}(8)), \newline \textsf{Find}(\textsf{Is}(\textsf{Face}(8)), \textsf{Face}(34), \textsf{GetAbove})) $\rightarrow$ \textsf{Crop}\} & Wedding & Crop image to feature just the bride and the groom when he is behind her. & 11 \\ 
 \hline
  13 & \{\textsf{Intersect}(\textsf{Find}(\textsf{Is}(\textsf{FaceObject}), \textsf{FaceObject}, \textsf{GetRight}), \newline \textsf{Find}(\textsf{Is}(\textsf{FaceObject}), \textsf{FaceObject}, \textsf{GetLeft})) $\rightarrow$ \textsf{Brighten}\} & Wedding & Brighten all faces except leftmost and rightmost face. & 11 \\ 
 \hline
 14 & \{\textsf{Find}(\textsf{Union}(\textsf{Is}(\textsf{Face}(34)), \textsf{Is}(\textsf{Smiling}), \textsf{Is}(\textsf{EyesOpen})), \newline \textsf{Object}(\textsf{Person}), \textsf{GetBelow}) $\rightarrow$ \textsf{Sharpen}\} & Wedding & Sharpen the groom, and all smiling people and people with their eyes open. & 12 \\
 \hline
 \end{tabularx}

 \begin{tabularx}{\textwidth}{|>{\hsize=.1\hsize\linewidth=\hsize}X|>{\hsize=1.6\hsize\linewidth=\hsize}X|>{\hsize=.3\hsize\linewidth=\hsize}X|
>{\hsize=.9\hsize\linewidth=\hsize}X|>{\hsize=.1\hsize\linewidth=\hsize}X|}
\hline
 15 & \{\textsf{Intersect}(\textsf{Find}(\textsf{Is}(\textsf{FaceObject}), \newline  \textsf{Face}(8), \textsf{GetRight}), \newline  \textsf{Find}(\textsf{Is}(\textsf{FaceObject}), \textsf{Face}(8), \newline \textsf{GetLeft})) $\rightarrow$ \textsf{Crop}\} & Wedding & Crop image to feature just bride when someone is to her left and right. & 13 \\ 
 \hline
 16 & \{\textsf{Union}(\textsf{Find}(\textsf{Is}(\textsf{Face}(8)), \textsf{FaceObject}, \textsf{GetRight}), \textsf{Find}(\textsf{Is}(\textsf{Face}(8)), \newline \textsf{FaceObject}, \textsf{GetLeft}), \textsf{Is}(\textsf{Face}(8))) \newline $\rightarrow$ \textsf{Crop}\} & Wedding & Crop image to feature just the bride and the people to her left and right. & 16 \\
 \hline
 17 & \{\textsf{Union}(\textsf{Is}(\textsf{Price}), \textsf{Is}(\textsf{PhoneNumber})) $\rightarrow$ \textsf{Blackout}\} & Receipts & Blackout all prices and phone numbers. & 5 \\ 
 \hline 
  18 & \{\textsf{Find}(\textsf{Is}(\textsf{Price}), \textsf{TextObject}, \textsf{GetLeft}) $\rightarrow$ \textsf{Brighten}\} & Receipts & Brighten text to the left of a price. & 5 \\ 
 \hline 
 19 & \{\textsf{Intersect}(\textsf{Is}(\textsf{Text}), \newline  \textsf{Complement}(\textsf{Is}(\textsf{Price}))) $\rightarrow$ \text{Blackout} \} & Receipts & Blackout all text that is not a price. & 6 \\ 
 \hline 
 20 & \{\textsf{Find}(\textsf{Is}(\textsf{Word}(\texttt{"total"})), \newline \textsf{Price}, \textsf{GetRight}) $\rightarrow$ \textsf{Brighten}\} & Receipts & Brighten all prices to the right of the word "total." & 6 \\ 
 \hline 
 21 & \{\textsf{Find}(\textsf{Is}(\textsf{Word}(\texttt{"total"})), \textsf{TextObject}, \textsf{GetRight}) $\rightarrow$ \textsf{Brighten} & Receipts & Brighten text to the right of the word "total." & 6 \\ 
 \hline 
 22 & \{\textsf{Find}(\textsf{Is}(\textsf{Word}(\texttt{"tax"})), \newline \textsf{TextObject}, \textsf{GetAbove}) $\rightarrow$ \textsf{Blackout}\} & Receipts & Blackout all text above the word "tax." & 6 \\
 \hline 
  23 & \{\textsf{Find}(\textsf{Find}(\textsf{Is}(\textsf{TextObject}), \textsf{TextObject}, \textsf{GetLeft}), \textsf{Text}, \textsf{GetLeft}) $\rightarrow$ \textsf{Brighten}\} & Receipts & Brighten all text except rightmost two columns. & 8 \\ 
 \hline 
 24 & \{\textsf{Intersect}(\textsf{Is}(\textsf{Text}), \newline  \textsf{Complement}(\textsf{Union}(\textsf{Is}(\textsf{Price}), \textsf{Is}(\textsf{PhoneNumber})))) $\rightarrow$ \textsf{Blackout}\} & Receipts & Blackout all text that is not a price or a phone number. & 9 \\ 
 \hline 
  25 & \{\textsf{Find}(\textsf{Find}(\textsf{Is}(\textsf{Word}(\texttt{"total"})), \textsf{Price}, \textsf{GetRight}), \textsf{IsPrice}, \textsf{GetAbove}) $\rightarrow$ \textsf{Brighten}\} & Receipts & Brighten the price that is above the total price. & 9 \\ 
  \hline
 26 & \{\textsf{Complement}(\textsf{Find}(\textsf{Find}(\textsf{Is}(\textsf{TextObject}), \textsf{Find}(\textsf{TextObject}), \textsf{GetAbove}), \textsf{TextObject}, \textsf{GetAbove})) $\rightarrow$ \textsf{Blackout} \} & Receipts & Blackout bottom two rows of text. & 10 \\ 
 \hline 
 27 & \{\textsf{Intersect}(\textsf{Is}(\textsf{TextObject}), \newline  \textsf{Complement}(\textsf{Union}(\textsf{Is}(\textsf{Word}(\texttt{"total"})), \textsf{Is}(\textsf{Price})))) $\rightarrow$ \textsf{Blackout}\} & Receipts & Blackout all text except prices and the word "total." & 10 \\ 
 \hline 
 28 & \{\textsf{Intersect}(\textsf{Is}(\textsf{Price}), \newline  \textsf{Complement}(\textsf{Find}(\textsf{Is}(\textsf{Word}(\texttt{"total"})), \textsf{TextObject}, \textsf{GetRight}))) $\rightarrow$ \textsf{Blackout}\} & Receipts & Blackout all prices that are not the total price. & 10 \\ 
 \hline 
 29 & \{\textsf{Union}(\textsf{Find}(\textsf{Is}(\textsf{Word}(\texttt{"total"})), \newline  \textsf{TextObject}, \textsf{GetRight}), \newline  \textsf{Find}(\textsf{Is}(\textsf{Word}(\texttt{"subtotal"})), \newline \textsf{TextObject}, \textsf{GetRight})) $\rightarrow$ \textsf{Blackout}\} & Receipts & Blackout all prices that are not the total price or subtotal price. & 13 \\ 
 \hline 
 30 & \{\textsf{Complement}(\textsf{Is}(\textsf{Object}(\texttt{car}))) $\rightarrow$ \textsf{Blur}\} & Objects & Blur all objects except cars. & 4 \\ 
 \hline
 31 & \{\textsf{Filter}(\textsf{Is}(\textsf{Object}(\texttt{car})), \textsf{FaceObject}) $\rightarrow$ \textsf{Blur}\} & Objects & Blur all faces in cars. & 5 \\ 
 \hline 
 32 & \{\textsf{Filter}(\textsf{Is}(\textsf{Object}(\texttt{car})), \textsf{TextObject}) $\rightarrow$ \textsf{Blur}\} & Objects & Blur all text on cars. & 5 \\
 \hline 
\end{tabularx}

\begin{tabularx}{\textwidth}{|>{\hsize=.1\hsize\linewidth=\hsize}X|>{\hsize=1.6\hsize\linewidth=\hsize}X|>{\hsize=.3\hsize\linewidth=\hsize}X|
>{\hsize=.9\hsize\linewidth=\hsize}X|>{\hsize=.1\hsize\linewidth=\hsize}X|}
\hline
 33 & \{\textsf{Find}(\textsf{Is}(\textsf{TextObject}), \textsf{Object}(\texttt{car}), \newline \textsf{GetParents}) $\rightarrow$ \textsf{Blur}\} & Objects & Blur all cars with text on them. & 6 \\ 
 \hline 
 34 & \{\textsf{Union}(\textsf{Is}(\textsf{Object}(\texttt{cat})), \textsf{Is}(\textsf{FaceObject})) \newline $\rightarrow$  \textsf{Brighten}\} & Objects & Brighten all faces and all cats. & 6 \\ 
 \hline 
 35 & \textsf{Union}(\textsf{Is}(\textsf{Object}(\texttt{cat})), \textsf{Is}(\textsf{EyesOpen})) $\rightarrow$ \textsf{Brighten}\} & Objects & Brighten all faces with eyes open and all cats. & 6 \\ 
 \hline
  36 & \{\textsf{Find}(\textsf{Is}(\textsf{Object}(\texttt{guitar})), \textsf{FaceObject}, \textsf{GetAbove}) $\rightarrow$ \textsf{Sharpen}\} & Objects & Sharpen faces of people playing guitar. & 6 \\ 
 \hline 
 37 & \{\textsf{Find}(\textsf{Is}(\textsf{Word}(\texttt{319})), \textsf{Object}(\texttt{car}), \newline \textsf{GetParents}) $\rightarrow$ \textsf{Blur}\} & Objects & Blur car with number 319. & 7 \\ 
 \hline 
 38 & \{\textsf{Union}(\textsf{Is}(\textsf{Object}(\texttt{car})), \textsf{Is}(\textsf{Object}(\texttt{bicycle}))) $\rightarrow$ \textsf{Brighten}\} & Objects & Brighten all cars and bicycles. & 7 \\ 
 \hline 
  39 & \{\textsf{Find}(\textsf{Is}(\textsf{Object}(\texttt{person})), \newline  \textsf{Object}(\texttt{bicycle}), \textsf{GetBelow}) $\rightarrow$ \textsf{Brighten}\} & Objects & Brighten all bicycles that are being ridden. & 7 \\ 
 \hline 
  40 & \{\textsf{Find}(\textsf{Is}(\textsf{Object}(\texttt{bicycle})), \newline  \textsf{BelowAge}(\texttt{18}), \textsf{GetAbove}) $\rightarrow$ \textsf{Blur}\} & Objects & Blur the faces of children riding bicycles. & 7 \\ 
 \hline 
 41 & \{\textsf{Complement}(\textsf{Union}(\textsf{Is}(\textsf{Object}(\texttt{car})), \textsf{Is}(\textsf{Object}(\texttt{bicycle})))) $\rightarrow$ \textsf{Blackout}\} & Objects & Blackout all objects except cars and bicycles. & 8 \\ 
 \hline 
  42 & \{\textsf{Intersect}(\textsf{Is}(\textsf{TextObject}), \newline  \textsf{Complement}(\textsf{Filter}(\textsf{Is}(\textsf{Object}(\texttt{car})), \newline  \textsf{TextObject}))) $\rightarrow$ \textsf{Blackout}\} & Objects & Blackout all text not on a car. & 9 \\
  \hline 
 43 & \{\textsf{Union}(\textsf{Is}(\textsf{Object}(\texttt{bicycle})), \textsf{Is}(\textsf{Object}(\texttt{car})), \textsf{Is}(\textsf{Object}(\texttt{person}))) $\rightarrow$ \textsf{Brighten}\} & Objects & Brighten all bicycles, cars, and people. & 10 \\ 
 \hline 
 44 & \{\textsf{Intersect}(\textsf{Is}(\textsf{FaceObject}), \newline  \textsf{Complement}(\textsf{Find}(\textsf{Is}(\textsf{Object}(\texttt{bicycle})), \textsf{FaceObject}, \textsf{GetAbove}))) $\rightarrow$ \textsf{Blur}\} & Objects & Blur faces of people not riding bicycles. & 10 \\ 
 \hline 
 45 & \{\textsf{Union}(\textsf{Is}(\textsf{Object}(\texttt{guitar})), \textsf{Find}(\textsf{Is}(\textsf{Object}(\texttt{guitar})), \textsf{FaceObject}, \textsf{GetAbove})) $\rightarrow \textsf{Brighten}$\} & Objects & Brighten all guitars and people playing guitar. & 10 \\ 
 \hline 
 46 & \{\textsf{Intersect}(\textsf{Is}(\textsf{Face}), \newline  \textsf{Complement}(\textsf{Find}(\textsf{Is}(\textsf{Object}(\texttt{guitar})), \textsf{FaceObject}, \textsf{GetAbove}))) $\rightarrow$ \textsf{Blur}\} & Objects & Blur faces of people not playing guitar. & 10 \\ 
 \hline
 47 & \{\textsf{Intersect}(\textsf{Is}(\textsf{Object}(\texttt{bicycle})), \newline  \textsf{Complement}(\textsf{Find}(\textsf{Is}(\textsf{Object}(\texttt{person})), \textsf{Object}(\texttt{bicycle}), \textsf{GetBelow}))) $\rightarrow$ \textsf{Sharpen}\} & Objects & Sharpen bicycles that are not being ridden. & 12 \\ 
 \hline 
 48 & \{\textsf{Intersect}(\textsf{Is}(\textsf{Object}(\texttt{bicycle})), \textsf{Complement}(\textsf{Find}(\textsf{Is}(\textsf{BelowAge}(\texttt{18})), \textsf{Object}(\texttt{bicycle}), \textsf{GetBelow}))) $\rightarrow$ \textsf{Sharpen}\} & Objects & Sharpen all bicycles that are not ridden by a child. & 12 \\ 
 \hline 
 49 & \{\textsf{Intersect}(\textsf{Is}(\textsf{Object}(\texttt{cat})), 
 \newline \textsf{Complement}(\textsf{Find}(\textsf{Is}(\textsf{Object}(\texttt{cat})), \newline  \textsf{Object}(\texttt{cat}), \textsf{GetBelow}))) $\rightarrow$ \textsf{Crop} \} & Objects & Crop image to feature just topmost cat. & 12 \\ 
 \hline 
 50 & \{\textsf{Intersect}(\textsf{Find}(\textsf{Is}(\textsf{Object}(\texttt{cat})), \newline \textsf{Object}(\texttt{cat}), \textsf{GetRight}), \newline \textsf{Find}(\textsf{Is}(\textsf{Object}(\texttt{cat})), \textsf{Object}(\texttt{cat}), \newline \textsf{GetLeft})) $\rightarrow$ \textsf{Brighten}\} & Objects & Brighten cats that are between two other cats. & 15 \\ 
 \hline 
 \end{tabularx}
 
 \section{List of Neural Attributes}

\begin{itemize}
    \item \textsf{FaceObject}. Objects identified by Amazon Rekognition's face recognition model.
    \item \textsf{Smiling}. Faces identified as smiling by Rekognition's face recognition model.
    \item \textsf{EyesOpen}. Faces identified as having open eyes by Rekognition's face recognition model.
    \item \textsf{MouthOpen}. Face identified as having open mouths by Rekognition's face recognition model.
    \item \textsf{BelowAge}$(N)$. Rekognition's face recognition model returns an upper and lower bound on a face's age. This predicate matches a face if the upper bound on its age is less than $N$.
    \item \textsf{Face}$(N)$. Rekognition assigns each unique face an identified. This predicate matches a face if its identified is equal to $N$.
    \item \textsf{TextObject}. Objects identified by Amazon Rekognition's text recognition model. 
    \item \textsf{Word}$(W)$. Rekognition's text recognition model returns the text value of each identified text object. This predicate matches a text object if its text value equals $W$.
    \item \textsf{PhoneNumber}. Text objects whose value matches the format of a North American phone number. 
    \item \textsf{Price}. Text objects whose value matches the format of a price.
    \item \textsf{Object}$(O)$. Objects identified by Rekognition's object recognition model. This model identifies 238 unique types of objects. This predicate matches an object whose type is equal to $O$.

\end{itemize}

\end{document}